\renewcommand*\env@matrix[1][*\c@MaxMatrixCols c]{%
  \hskip -\arraycolsep
  \let\@ifnextchar\new@ifnextchar
  \array{#1}}
\newcommand{\figref}[1]{Fig.~\ref{#1}}
\newcommand{\secref}[1]{Section~\ref{#1}}
\newcommand{\remref}[1]{\textit{Remark~\ref{#1}}}
\newcommand{\assref}[1]{\textit{Assumption~\ref{#1}}}
\newcommand{\defref}[1]{\textit{Definition~\ref{#1}}}
\newcommand{\thmref}[1]{\textit{Theorem~\ref{#1}}}
\newcommand{\exmref}[1]{\textit{Example~\ref{#1}}}
\newcommand{\algoref}[1]{\textit{Algorithm~\ref{#1}}}
\newcommand{\Rset}{\mathbb{R}}
\newcommand{\Nset}{\mathbb{N}}
\newcommand{\Zset}{\mathbb{Z}}
\newcommand{\Bs}{\mathcal{B}}
\newcommand{\Es}{\mathcal{E}}
\newcommand{\Ks}{\mathcal{K}}
\newcommand{\Ns}{\mathcal{N}}
\newcommand{\Ps}{\mathcal{P}}
\newcommand{\Qs}{\mathcal{Q}}
\newcommand{\Rs}{\mathcal{R}}
\newcommand{\Ss}{\mathcal{S}}
\newcommand{\Us}{\mathcal{U}}
\newcommand{\Ws}{\mathcal{W}}
\newcommand{\Xs}{\mathcal{X}}
\newcommand{\diag}{\operatorname{diag}}
\newcommand{\post}{\operatorname{post}}
\newcommand{\Post}{\operatorname{Post}}
\newtheorem{defn}{Definition}
\newtheorem{exmp}{Example}
\newtheorem{assum}{Assumption}
\newtheorem{rem}{Remark}
\newtheorem{thm}{Theorem}
\newif\ifthesis
\begin{document}

%\begin{frontmatter}
\begin{center}
  {\bf\Large Switched-Actuator Systems with Setup Times: Efficient Modeling, MPC, and Application to Hyperthermia Therapy\footnote{This research has been made possible by the Dutch Cancer Society and the Netherlands Organisation for Scientific Research (NWO) as part of their joint Partnership Programme: ``Technology for Oncology.'' This project is partially financed by the PPP Allowance made available by Top Sector Life Sciences \& Health. This work is also partially funded by the European Union via the IPaCT Project and by the German Federal Ministry of Education and Research ``MR-HIFU-Pancreas.''}}
  \\[5mm]
  {\large D.A. Deenen$^*$, E. Maljaars$^*$, L.C. Sebeke$^{**}$, B. de Jager$^*$, E.~Heijman$^{***}$, H. Gr\"{u}ll$^{**}$, W.P.M.H. Heemels$^*$}\\[3mm]
  $^*$ Control Systems Technology, Department of Mechanical Engineering, Eindhoven University of Technology, The Netherlands\\
  $^{**}$ University Hospital of Cologne, University of Cologne, Germany\\
  $^{***}$ Philips Research, The Netherlands
\end{center}

\section*{Abstract}
Switched-actuator systems with setup times (SAcSSs) are systems in which the actuator configuration has to be switched during operation, and where the switching induces non-negligible actuator downtime.
Optimally controlling SAcSSs requires the online solving of both a discrete actuator allocation problem, in which the switch-induced actuator downtime is taken into account, as well as an optimization problem for the (typically continuous) control inputs. Mixed-integer model predictive control (MI-MPC) offers a powerful framework for tackling such problems.
However, the efficient modeling of SAcSSs for MI-MPC is not straightforward, and real-time feasibility is often a major hurdle in practice.
It is the objective of this paper to provide an intuitive and systematic modeling procedure tailored to SAcSSs, which is specifically designed to allow for user-friendly controller synthesis, and to yield efficient MI-MPCs.
We apply these new results in a case study of large-volume magnetic-resonance-guided high-intensity focused ultrasound hyperthermia, which involves the heating of tumors (using real-valued local heating controls, as well as discrete range-extending actuator relocation during which no heating is allowed) to enhance the efficacy of radio- and chemotherapy.

%\linenumbers

\section{Introduction}\label{mip:sec:intro}

A major motivation for the system-theoretic contributions of this paper is large-volume magnetic-resonance-guided high-intensity focused ultrasound (MR-HIFU) hyperthermia \cite{Maloney2015}.
In hyperthermia treatments, malignant tissue is heated to about 42~$^\circ$C for 60 to 90 minutes. Using MR-HIFU, the heating is applied locally and noninvasively via ultrasound waves, based on real-time temperature measurements obtained with an MRI scanner.
The heating sensitizes the tissue to the effects of chemo- and radiotherapy without adding any undesirable (toxic) side effects \cite{Oei2015,Mallory2016}. Thereby, MR-HIFU hyperthermia enables significantly higher success rates and allows for a considerable reduction of the unwanted side effects of chemo- and radiotherapeutic cancer treatments \cite{Datta2015,Issels2018}.

MR-HIFU allows for powerful heating with millimeter-accurate steering. Using MPC, HIFU-mediated thermal therapies of optimal quality can be realized, while respecting actuator and safety constraints, see
\ifthesis
\cite{Arora2007,DeBever2014,Hensley2015,Sebeke2019}.
\else
\cite{Arora2007,DeBever2014,Hensley2015,Sebeke2019,Deenen2020a,Deenen2020}.
\fi
Unfortunately, this high accuracy comes with limited spatial heating range. Thereby, only small tumors ($\leq$16~mm diameter) can be treated using a stationary actuator. For larger tumors, the actuator itself must be relocated, during which no heating can occur \cite{Tillander2016}. As the set of admissible positions must be discrete and finite, the resulting system can be described as a switched system, where the input model differs depending on the actuator location. However, a first distinctive property of this hyperthermia system, compared to typical switched systems, is that the time required for actuator relocation, and thus a mode switch, is non-negligible. Second, any mode switch induces nonzero actuator downtime, during which the system itself keeps evolving in time according to its unforced dynamics. Clearly, for the optimal control of the heating process in large-volume MR-HIFU hyperthermia, these two features have to be incorporated in the controller design, determining online the continuous local heating as well as the discrete actuator positioning.

The essential features recognized in large-volume MR-HIFU hyperthermia, i.e., dealing with dynamical systems in which the (finite number of) actuator configurations have to be switched and where the switching takes significant time during which no active control is possible, are of a general nature as they can be identified in many other applications. Indeed, one could think about manufacturing systems in which machine reconfiguration takes time, see \cite{Allahverdi1999,Allahverdi2008,Allahverdi2015} for comprehensive surveys, but also
agents in agriculture (e.g., drones, fertilizers, irrigation systems)
which must serve multiple
(sub)fields resulting in significant field-to-field travel times \cite{Cobbenhagen2018,Schoonen2019}, or the coordinated deployment of fire-fighting units for wildfire management \cite{Donovan2003,Haight2007,Petrovic2012}.
Motivated by this range of applications,
we formally introduce
\ifthesis
{\color{black}
the class of
}%
\fi
\emph{switched-actuator systems with setup times} (SAcSSs), exhibiting the mentioned features, as the first contribution of this work.

As a second contribution, we address the design of easy-to-derive and efficient MPC schemes for SAcSSs. In particular, we desire a natural and systematic modeling procedure for SAcSSs, yielding a compact model, i.e., with a small number of integer variables, that can be directly integrated into a mixed-integer MPC (MI-MPC) \cite{Borrelli2017} setup. The small number of integers is motivated by the desire to keep the computational burden limited, thereby facilitating real-time implementation of the resulting MPC schemes.
The mixed-integer programming (MIP) compatibility could be achieved by describing a SAcSS as a mixed logical dynamical (MLD) system \cite{Bemporad1999}. However, efficiently capturing the actuator switching behavior including setup times is not straightforward. Clearly, one could attempt to describe a SAcSS as a discrete hybrid automaton (DHA) \cite{Torrisi2004}, which using the modeling language HYSDEL could then be automatically converted into MLD form amenable for online optimization. Unfortunately, due to the generic nature of the DHA framework, it may be unclear how to best incorporate SAcSSs' actuator switching, thereby potentially leading to models containing too many Boolean variables, which would have negative consequences for the computational complexity of the online MPC problem. For example, a SAcSS could first be cast as a constrained switched (linear) system \cite{Liberzon2003,Philippe2016} before deriving its equivalent DHA and (using HYSDEL) MLD forms, yielding a similar result as the lifting approach in \cite{Subramanian2012}.
However, incorporating an actuator switch's setup time would in these cases require the inclusion of as many additional ``transitioning modes'' (i.e., duplicates of the zero-input mode corresponding to the unforced dynamics) as the switch's setup time. In turn, this results in a model with many integer (Boolean) variables, leading indeed to high-complexity representations. Therefore, as a second contribution, we propose an intuitive and convenient modeling procedure tailored to SAcSSs, specifically designed to yield compact models that are directly suitable for efficient MI-MPC setups.

The final contribution of this work consists of a large-volume MR-HIFU hyperthermia cancer therapy case study. Herein, we apply the novel methods to obtain a SAcSS model and MI-MPC, of which the performance and computational efficiency are validated.

A preliminary version of this work was published in \cite{Deenen2020a}, which discusses only a simulation study of MI-MPC for large-volume MR-HIFU hyperthermia (different from the one included here) as an illustrative proof-of-concept, but does not provide the general modeling procedure for SAcSSs
presented in this paper. In fact, the class of SAcSSs is formally defined here for the first time. Moreover, in this work we formalize the key concepts of the proposed modeling framework, provide rigorous proofs, discuss the general MI-MPC setup for SAcSSs, and investigate its improved computational efficiency, all of which was not included in \cite{Deenen2020a}.

The remainder of the paper is organized as follows. In \secref{mip:sec:sacss} we formally define SAcSSs, for which in \secref{mip:sec:modeling} we present the intuitive and compact modeling procedure. The resulting MI-MPC will be given in \secref{mip:sec:mipmpcsacss}. In \secref{mip:sec:mrhifu}, the large-volume MR-HIFU hyperthermia case study is presented.
%, in which the proposed procedure is used to derive a SAcSS model and MI-MPC, the latter of which is validated in terms of control performance and computation efficiency by means of simulation.
Finally, \secref{mip:sec:conclusion} summarizes the key contributions of this work.

\textbf{Notation.} The real, integer, and natural numbers (including zero) are denoted by $\Rset$, $\Zset$, and $\Nset$, respectively. Given a set $\Ss\subseteq\Rset$ and values $a,b\in\Ss$, we use $\Ss_{>a}$, $\Ss_{\geq a}$, $\Ss_{<a}$, and $\Ss_{\leq a}$ to denote the subset of $\Ss$ of which the elements satisfy the condition in the subscript, and we define $\Ss_{[a,b]}=\{s\in\Ss\mid a\leq s\leq b\}$ (hence $\Ss_{[a,b]}=\emptyset$ if $a>b$). $\Ps(\Ss)$ denotes the power set of $\Ss$, being the collection of all subsets of $\Ss$, and $|\Ss|$ denotes the cardinality of $\Ss$. For $n\in\Nset_{>0}$, we denote the $n$-dimensional identity matrix by $I_n$, all-ones vector by $1_n$, and all-zeros vector by $0_n$ (subscript may be omitted if the dimension is clear from context)%
\ifthesis
.
\else
, and the block diagonal matrix with blocks $A_1,\ldots,A_n$ by $\diag(A_1,\dots,A_n)$.
\fi

\ifthesis
\section{Switched-actuator system with setup times}\label{mip:sec:sacss}
\else
\section{Switched-Actuator System with Setup Times}\label{mip:sec:sacss}
\fi

In this section, we formally introduce the class of SAcSSs using a state-space system, a weighted graph, and an actuator selector function.

\begin{defn}\label{mip:def:edgeweightedautomaton}\cite{Gross2013}
  A simple (i.e., without self-loops or multi-arcs) arc-weighted directed graph (digraph) is defined by the triple $\Gamma=(\Qs,\Es,s)$, where $\Qs=\Nset_{[1,N_q]}$ is the set of $N_q\in\Nset_{>0}$ nodes, $\Es\subseteq\Qs^2\setminus\{(q,q)\mid q\in\Qs\}$ is the set of arcs (directed edges), and $s:\Es\to\Nset$ is an arc-weighting function. If $\Es=\Qs^2\setminus\{(q,q)\mid q\in\Qs\}$, we say that the simple digraph is complete. If $s(q,\tilde{q})\leq s(q,\bar{q})+s(\bar{q},\tilde{q})$ for all distinct nodes $q,\tilde{q},\bar{q}\in\Qs$ of a complete simple digraph, we say that the digraph satisfies the triangle inequality.
\end{defn}

\begin{exmp}\label{mip:exm:sacssautomaton}
  A complete simple arc-weighted digraph $\Gamma$ with $N_q=4$ nodes and weighting function
  \begin{equation}\label{mip:eq:exms}
    s(q,\tilde{q}) = \left\{\begin{array}{ll} 1, &\text{if } (q,\tilde{q})\in\{(1,3),(3,1)\}, \\
    %2, &\text{if } \sigma\in\{(1,2),(2,1),(2,3),(3,2),(3,4),(4,3),(4,1),(1,4)\}, \\
    3, &\text{if } (q,\tilde{q})\in\{(2,4),(4,2)\}, \\
    2, &\text{otherwise}, \end{array}\right.
  \end{equation}
  satisfying the triangle inequality, is depicted in \figref{mip:fig:exmsacss}. Note that although here $s(q,\tilde{q})=s(\tilde{q},q)$ for all $(q,\tilde{q})\in\Es$, this is not necessary.
  \begin{figure}[t]
    \centering
    \includegraphics[width=0.35\textwidth,clip]{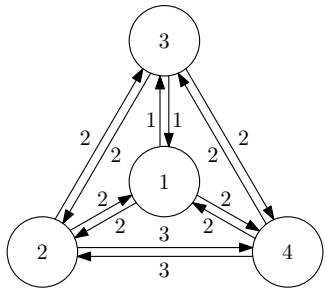}
    \caption{Example of a complete simple arc-weighted digraph. The numbered circles represent the nodes, and the numbered arrows represent the weighted arcs.}
    \label{mip:fig:exmsacss}
  \end{figure}%
\end{exmp}

\begin{defn}\label{mip:def:sacss}
  A switched-actuator system with setup times (SAcSS) is defined by the quintuple $\Sigma=(\Xs,\Us,f,\Gamma,\Phi)$, where $\Xs\subseteq\Rset^{n_x}$ with $n_x\in\Nset_{>0}$ is the state space, $\Us\subseteq\Rset^{n_u}$ with $n_u\in\Nset_{>0}$ is the input space, $f:\Xs\times\Us\to\Xs$ is a state transition function, $\Gamma$ is a complete simple arc-weighted digraph that satisfies the triangle inequality, and $\Phi:\Qs\cup\Es\to\Ps(\Nset_{[1,n_u]})$ is an actuator selector function for which it holds that
  \begin{equation}\label{mip:eq:Phi}
    \Phi(\sigma) \subseteq \left\{\begin{array}{ll} \Nset_{[1,n_u]}, & \text{if }\sigma\in\Qs, \\ \emptyset, & \text{if }\sigma\in\Es, \end{array}\right.
  \end{equation}
  where $\sigma\in\Qs\cup\Es$ indicates the actuator state.
\end{defn}

Some explanation regarding \defref{mip:def:sacss} is in order. In a SAcSS, the plant dynamics are given by
\begin{equation}\label{mip:eq:sacss}
  x_{k+1} = f(x_k,u_k)
\end{equation}
with state $x_k\in\Xs$ and input $u_k\in\Us$ at discrete time $k\in\Nset$. Note that a solution of \eqref{mip:eq:sacss} is completely defined by an initial condition $x_0\in\Xs$ and an input sequence $\bm{u}_{[0,K]}=(u_0,\dots,u_K)\in\Us^{K+1}$ with $K\in\Nset$. The input sequence, however, cannot be freely chosen, but must satisfy the restrictions resulting from the actuator switching, as will be defined in the remainder of this section.
First, consider the arc-weighted digraph $\Gamma$, which
describes the actuator switching behavior. That is, the nodes in $\Gamma$ represent the $N_q$ \emph{operational} actuator modes $q\in\Qs$, where in each mode a different subset of input channels is allowed to be nonzero. An arc $(q,\tilde{q})\in\Es$ represents the possible actuator mode switch from $q\in\Qs$ to $\tilde{q}\in\Qs$.
\ifthesis
{\color{black}
Correspondingly, the digraph contains no self-loops (since remaining in an actuator mode requires no switch) or multi-arcs (since we define each possible switch only once), and hence the digraph is simple.
}%
\fi
The arc weight $s(q,\tilde{q})\in\Nset$ represents the setup time associated with the mode switch $(q,\tilde{q})\in\Es$, where the setup time is the number of discrete time instants during which active control is disabled for \eqref{mip:eq:sacss}, i.e., during which all inputs must be zero. Hence, the weighting function $s$ describes the setup times. We denote the actuator state at time $k$ by $\sigma_k\in\Qs\cup\Es$, for which it holds that $\sigma_k=q$ when the actuator mode $q\in\Qs$ is active, and $\sigma_k=(q,\tilde{q})\in\Es$ when the SAcSS is in the process of switching from mode $q\in\Qs$ to $\tilde{q}\in\Qs$. Then, the actuator selector function $\Phi$ in \eqref{mip:eq:Phi} selects the subset of input channels that are allowed to be nonzero on the basis of the actuator state. In particular, when at time $k$ the actuator is operational in mode $q\in\Qs$, i.e., $\sigma_k=q$, the input channels $u_{k,i}$ with $i\in\Phi(\sigma_k)=\Phi(q)\subseteq\Nset_{[1,n_u]}$ can be nonzero, whereas $u_{k,j}=0$ for all $j\in\Nset_{[1,n_u]}\setminus\Phi(q)$.
When, on the other hand, the SAcSS is in the process of switching from a mode $q\in\Qs$ to another mode $\tilde{q}\in\Qs$, and thus $\sigma_k=(q,\tilde{q})\in\Es$, all inputs are disabled, i.e., $u_{k,i}=0$ for all $i\in\Nset_{[1,n_u]}\setminus\Phi(\sigma_k)=\Nset_{[1,n_u]}$ in this case as $\Phi(\sigma_k)=\Phi(q,\tilde{q})=\emptyset$.

Next, we define what actuator sequences are considered to describe proper switching behavior. To this end, we first introduce the
\emph{destination} mode corresponding to $\sigma_k$
denoted
by $\post(\sigma_k)$, where $\post:\Qs\cup\Es\to\Qs$ is defined as $\post(\sigma)=q$ if $\sigma=q\in\Qs$ or $\sigma=(\tilde{q},q)\in\Es$ (note that in case of no switch, the destination mode is simply defined as the current mode).
Second,
if the actuator state describes a mode switch, i.e., $\sigma_k\in\Es$, we denote the set of possible actuator states upon finishing the switch by $\Post(\sigma_k)$, where $\Post:\Es\to\Qs\cup\Es$ is defined as $\Post(q,\tilde{q}) = \{\tilde{q}\} \cup \{(\tilde{q},\bar{q})\in\Es \mid s(\tilde{q},\bar{q})>0\} \cup \{\bar{q}\in\Qs \mid (\tilde{q},\bar{q})\in\Es \text{ and } s(\tilde{q},\bar{q})=0\}$. In other words, given $\sigma_k=(q,\tilde{q})\in\Es$, the SAcSS can either stay in the destination mode $\tilde{q}$ upon arrival at time $l\in\Nset_{>k}$, indicated by $\sigma_l=\tilde{q}$,
or immediately start another switch $(\tilde{q},\bar{q})\in\Es$, resulting in $\sigma_l=(\tilde{q},\bar{q})$ or $\sigma_{l}=\bar{q}$ in case of nonzero or zero setup time $s(\tilde{q},\bar{q})$, respectively.

\begin{exmp}\label{mip:exm:postPost}
  Consider a SAcSS with $\Gamma$ as in \exmref{mip:exm:sacssautomaton}. Then for $\sigma=(1,2)$, the destination mode is given by $\post(\sigma) = 2$, and the destination actuator state set is $\Post(\sigma) = \{2,(2,1),(2,3),(2,4)\}$.
\end{exmp}

Now, we can define what actuator sequences describe proper switching behavior for a SAcSS $\Sigma$, which we refer to as being \emph{$\Sigma$-admissible} (or \emph{admissible} in case $\Sigma$ is clear from context), as follows.

\begin{defn}\label{mip:def:admissiblesigma}
  An actuator sequence $\bm{\sigma}_{[0,K]}=(\sigma_0,\dots,\sigma_K)\in(\Qs\cup\Es)^{K+1}$, $K\in\Nset$, is called \emph{$\Sigma$-admissible}, if
  \begin{enumerate}[(A)]
    \item\label{mip:def:admissiblesigma1} $\sigma_0\in\Qs$, and,
    \item\label{mip:def:admissiblesigma2} if $\sigma_k\neq\sigma_{k-1}$ with $\sigma_{k-1}\in\Qs$ and $\sigma_k\in\Qs\cup\Es$ for some $k\in\Nset_{[1,K]}$, then $\sigma_l=\tilde{\sigma}=(\sigma_{k-1},\post(\sigma_k))$ for all $l\in\Nset_{[k,\min\{k+s(\tilde{\sigma})-1,K\}]}$, and, if $k+s(\tilde{\sigma})\leq K$, also $\sigma_{k+s(\tilde{\sigma})}\in\Post(\tilde{\sigma})$.
  \end{enumerate}
\end{defn}

In other words, \eqref{mip:def:admissiblesigma1} specifies that the SAcSS must initialize in an operational actuator mode $\sigma_0\in\Qs$. By \eqref{mip:def:admissiblesigma2}, any actuator mode switch that is started must also be completed while adhering to its setup time. In particular, if at some time $k\in\Nset_{[1,K]}$ the SAcSS starts the switch from $q\in\Qs$ to $\tilde{q}\in\Qs$, we have $\sigma_{k-1}=q$, and either $\sigma_k=\tilde{q}$ or $\sigma_k=(q,\tilde{q})$ in case of zero or nonzero setup time $s(q,\tilde{q})$, respectively, and hence $\sigma_k\neq\sigma_{k-1}$. Moreover, since $\sigma_k\in\{\tilde{q},(q,\tilde{q})\}$, we have $\post(\sigma_k)=\tilde{q}$, and hence the considered mode switch can be denoted by $\tilde{\sigma}=(\sigma_{k-1},\post(\sigma_k))$ with setup time $s(\tilde{\sigma})$. Thus, \eqref{mip:def:admissiblesigma2} indeed states that, once started, this switch $\tilde{\sigma}$ must be completed, by defining that the SAcSS must be in the process of switching $\sigma_l=\tilde{\sigma}$ for all times $l\in\Nset_{[k,k+s(\tilde{\sigma})-1]}$ if $k+s(\tilde{\sigma})\leq K$ (note that $\Nset_{[k,k+s(\tilde{\sigma})-1]}=\emptyset$ if $s(\tilde{\sigma})=0$, corresponding to an instantaneous mode switch), or for all $l\in\Nset_{[k,K]}$ if $k+s(\tilde{\sigma})>K$. In case of the former, upon reaching the destination mode $\tilde{q}$ at time $k+s(\tilde{\sigma})\leq K$, all possible options for $\sigma_{k+s(\tilde{\sigma})}$ are described by $\Post(\tilde{\sigma})$.

Recall that the control inputs of \eqref{mip:eq:sacss} must adhere to the restrictions resulting from the actuator switching via $\Phi(\sigma_k)$.
We define the $\Sigma$-admissibility of an actuator sequence in combination with an input sequence as follows.

\begin{defn}\label{mip:def:admissiblesigmau}
  Let a SAcSS $\Sigma$ be given. Then, a pair $(\bm{\sigma}_{[0,K]},\bm{u}_{[0,K]})$, $K\in\Nset$, consisting of an actuator sequence $\bm{\sigma}_{[0,K]}\in(\Qs\cup\Es)^{K+1}$ and a control input sequence $\bm{u}_{[0,K]}=(u_0,\dots,u_K)\in\Us^{K+1}$, is called \emph{$\Sigma$-admissible} if $\bm{\sigma}_{[0,K]}$ is a $\Sigma$-admissible actuator sequence, and for all $k\in\Nset_{[0,K]}$ it holds that $u_{k,i}=0$ for all $i\in\Nset_{[1,n_u]}\setminus\Phi(\sigma_k)$.
\end{defn}

To facilitate the automated model generation proposed in \secref{mip:sec:modeling}, we use the following standing
assumption.

\begin{assum}\label{mip:ass:modallydisjointinputs}
  For the actuator selector function $\Phi$, it holds that $\Phi(q)\cap\Phi(\tilde{q})=\emptyset$ for all $q,\tilde{q}\in\Qs$, $q\neq\tilde{q}$.
\end{assum}

\assref{mip:ass:modallydisjointinputs} states that the SAcSS has disjoint input channels per actuator mode. This is without loss of generality, as this can always be achieved by simple duplication of the shared input channels%
\ifthesis
{\color{black}%
, see \remref{mip:rem:modallydisjointinputs}.

\begin{rem}\label{mip:rem:modallydisjointinputs}
  Consider a SAcSS with some input $w\in\Ws\subseteq\Rset^{n_w}$, $n_w\in\Nset_{>0}$ and corresponding actuator selector function $\Phi_w:\Qs\cup\Es\to\Ps(\Nset_{[1,n_w]})$. In case of shared input channels, i.e., $\Phi_w(q)\cap\Phi_w(\tilde{q})\neq\emptyset$ for some $q,\tilde{q}\in\Qs$, $q\neq\tilde{q}$, we can define the augmented input $u\in\Us\subseteq\Rset^{n_u}$, where $n_u=\sum_{q\in\Qs}|\Phi_w(q)|>n_w$, by duplicating each input channel in $w$ as many times as the number of distinct modes in which it is used according to $\Phi_w$, and subsequently defining $\Phi$ such that $u$ contains disjoint input channels. For example, we can define $u$ as
  \begin{equation}\label{mip:eq:remmodallydisjointinputsu}
    u_{i+\sum_{j=1}^{q-1}|\Phi_w(j)|} = w_{\Phi_{w,i}(q)}%\ \text{for all}\ q\in\Qs,\ i\in\Nset_{[1,|\Phi_w(q)|]},
  \end{equation}
  for all $i\in\Nset_{[1,|\Phi_w(q)|]}$ and $q\in\Qs$, where $\Phi_{w,i}(q)$ denotes the $i$-th element of $\Phi_w(q)$, and correspondingly $\Phi$ as
  \begin{equation}\label{mip:eq:remmodallydisjointinputsPhi}
    \Phi(\sigma) = \left\{\begin{array}{ll} \Nset_{[1+\sum_{i=1}^{q-1}|\Phi_w(i)|,\sum_{i=1}^{q}|\Phi_w(i)|]}, & \text{if }\sigma=q\in\Qs, \\ \emptyset, & \text{if}\ \sigma\in\Es. \end{array}\right.
  \end{equation}
  If $w$ does not contain shared input channels, then \eqref{mip:eq:remmodallydisjointinputsu}-\eqref{mip:eq:remmodallydisjointinputsPhi} can be used to reorder the input channels per mode.
\end{rem}
}%
\else
.
\fi
In the SAcSS model and the resulting MI-MPC proposed in this paper, this property allows for enabling and disabling each mode's inputs using their corresponding Boolean variables. Combined with the digraph $\Gamma$ being complete and satisfying the triangle inequality by \defref{mip:def:sacss}, this allows for systematically encoding $\Sigma$-admissibility into the mentioned Boolean variables using the setup times, and thereby the automated derivation of mixed-integer constraints in \secref{mip:subsec:MIconstraints} as part of our modeling setup. Note that the digraph being complete and satisfying the triangle inequality is not a restrictive property, as discussed next in \remref{mip:rem:completeandtriangle}.

\begin{rem}\label{mip:rem:completeandtriangle}
  By completeness of $\Gamma$, a switch from any actuator mode $q\in\Qs$ to any other mode $\tilde{q}\in\Qs$ is possible. This is without loss of generality when the SAcSS' true mode-switching behavior is described by a strongly connected digraph (i.e., where for all nodes $q,\tilde{q}\in\Qs$, $q\neq\tilde{q}$, there exists a directed path connecting $q$ to $\tilde{q}$ and vice versa \cite{Gross2013}). That is, in case $(q,\tilde{q})\notin\Es$ in the strongly connected digraph, we can simply define the additional arc $(q,\tilde{q})$ with weight $s(q,\tilde{q})\coloneqq\min_{N,\bm{q}_{[1,N]}\in\Qs^N}\sum_{i=1}^{N-1}s(q_i,q_{i+1})$, corresponding to the minimum-time mode sequence $\bm{q}_{[1,N]}=(q_1,\ldots,q_N)$ with $q_1=q$, $q_N=\tilde{q}$, and $(q_i,q_{i+1})\in\Es$ for all $i\in\Nset_{[1,N-1]}$. Next, satisfying the triangle inequality implies that any switch $(q,\tilde{q})\in\Es$ cannot induce a setup time larger than the cumulative setup time along any other sequence connecting $q$ to $\tilde{q}$, which without loss of generality represents the underlying assumption that each switch corresponds to changing the actuator configuration in the fastest way possible.
\end{rem}

In the upcoming two sections, we aim to achieve this paper's main objective of providing a convenient modeling and controller synthesis procedure for SAcSSs that leads to computationally efficient MI-MPCs.

\ifthesis
\section{SAcSS modeling framework}\label{mip:sec:modeling}
\else
\section{SAcSS Modeling Framework}\label{mip:sec:modeling}
\fi

In this section, we present an intuitive and systematic procedure for modeling SAcSSs, which directly yields a compact MIP-compatible model. This procedure is specifically designed to (a) avoid the need for introducing many auxiliary discrete (Boolean) variables, which would otherwise increase model complexity and thereby (likely) the corresponding MI-MPC's computational burden, and (b) allow for user-friendly model specification of a SAcSS $\Sigma$ as in \defref{mip:def:sacss}, on the basis of which the MIP-compatible model can be automatically generated. In this new format, the mode-switching behavior is incorporated using mixed-integer linear inequality constraints, requiring only $N_q$ Boolean variables.

\subsection{Input bounds}\label{mip:subsec:modelingsacss}

By \assref{mip:ass:modallydisjointinputs}, the SAcSS has disjoint inputs. Without loss of generality,
\ifthesis
{\color{black}
and in accordance with \remref{mip:rem:modallydisjointinputs},
}%
\fi
let us also assume the input channels in $u$ are ordered per mode according to
\begin{equation}\label{mip:eq:sacssu}
  \ifthesis
  {\color{black}
  u = \begin{bmatrix} u^1 \\ \vdots \\ u^{N_q} \end{bmatrix} \in \Us \subseteq \Rset^{n_u},
  }
  \else
  u = \begin{bmatrix} u^{1\top} & \dots & u^{N_q\top} \end{bmatrix}^\top \in \Us \subseteq \Rset^{n_u},
  \fi
\end{equation}
where $u^q$ represents the input of dimension $n_u^q=|\Phi(q)|$ corresponding to mode $q\in\Qs$. Note that $\sum_{q\in\Qs}n_u^q=n_u$. We define $\underline{u}^q\in\Rset^{n_u^q}$ and $\overline{u}^q\in\Rset^{n_u^q}$ as the corresponding lower and upper bounds, resulting in the input sets
\begin{equation}\label{mip:eq:inputboundsmodal}
  \Us_q = \{ u^q\in\Rset^{n_u^q} \mid \underline{u}^q \leq u^q \leq \overline{u}^q \},\quad q\in\Qs.
\end{equation}
We collect these bounds in the stacked vectors $\underline{u}=[\underline{u}^{1\top}\ \cdots\ \underline{u}^{N_q\top}]^\top$ and $\overline{u}=[\overline{u}^{1\top}\ \cdots\ \overline{u}^{N_q\top}]^\top$, yielding
\begin{equation}\label{mip:eq:inputboundsall}
  \Us = \{ u\in\Rset^{n_u} \mid \underline{u} \leq u \leq \overline{u} \}.
\end{equation}

\subsection{Setup time matrix}

We construct the \emph{setup time matrix}, being the weighted adjacency matrix of $\Gamma$, as
\begin{equation}\label{mip:eq:setuptimematrix}
  S \coloneqq \begin{bmatrix} s_{11} & \cdots & s_{1N_q} \\ \vdots & \ddots & \vdots \\ s_{N_q1} & \cdots & s_{N_qN_q} \end{bmatrix} \in \Nset^{N_q\times N_q},
\end{equation}
with $s_{q\tilde{q}}\coloneqq s(q,\tilde{q})$ for all $(q,\tilde{q})\in\Es$, and where we additionally define $s_{qq}=0$ for all $q\in\Qs$, reflecting the fact that staying in the same mode induces no setup time.

\subsection{Integer variables and admissibility conditions}\label{mip:subsec:inputconstraints}

\subsubsection{Activator}

To encode the actuator state $\sigma_k\in\Qs\cup\Es$, we use a one-hot Boolean vector (i.e., with one element equal to 1 and all others 0 \cite{Bemporad2011}), which we refer to as the \emph{activator}.
To this end, let $\Bs=\{0,1\}^{N_q}$ denote the set of $N_q$ Boolean variables, and consider the function $\Delta:\Qs\cup\Es\to\Bs$ defined as $\Delta(\sigma)=[\Delta_1(\sigma)\ \cdots\ \Delta_{N_q}(\sigma)]^\top$, where
\begin{subequations}\label{mip:eq:delta}
\begin{equation}\label{mip:eq:deltavars}
  \Delta_q(\sigma) = \left\{\begin{array}{ll} 1, &\text{if } \post(\sigma)=q, \\ 0, &\text{otherwise}. \end{array}\right.
\end{equation}
The activator $\delta_k$ at time $k$ is related to $\sigma_k$ according to
\begin{equation}\label{mip:eq:deltavec}
  \delta_k = \begin{bmatrix} \delta_k^1 \\ \vdots \\ \delta_k^{N_q} \end{bmatrix} = \begin{bmatrix} \Delta_1(\sigma_k) \\ \vdots \\ \Delta_{N_q}(\sigma_k) \end{bmatrix} = \Delta(\sigma_k) \in \Bs.
\end{equation}
The activator indicates which operational mode is (being) activated, i.e., $\delta_k^q=1$ if $\post(\sigma_k)=q$, meaning at time $k$ the SAcSS is either in the corresponding actuator mode ($\sigma_k=q$) or switching towards this mode ($\sigma_k=(\tilde{q},q)$), and $\delta_k^q=0$ otherwise. Clearly, it holds that the activator is a one-hot vector, and hence
\begin{equation}\label{mip:eq:deltasum}
  \sum_{q\in\Qs}\delta_k^q = 1.
\end{equation}
\end{subequations}

\begin{defn}\label{mip:def:admissibledelta}
  An activator sequence $\bm{\delta}_{[0,K]}=(\delta_0,\dots,\delta_K)\in\Bs^{K+1}$, $K\in\Nset$, is called \emph{$\Sigma$-admissible} if there is a $\Sigma$-admissible actuator sequence $\bm{\sigma}_{[0,K]}$ with $\delta_k=\Delta(\sigma_k)$ for all $k\in\Nset_{[0,K]}$.
\end{defn}

\begin{defn}\label{mip:def:admissibledeltau}
  A pair $(\bm{\delta}_{[0,K]},\bm{u}_{[0,K]})$, $K\in\Nset$, is called \emph{$\Sigma$-admissible} if there exists a $\Sigma$-admissible actuator sequence $\bm{\sigma}_{[0,K]}$ such that $\delta_k=\Delta(\sigma_k)$ for all $k\in\Nset_{[0,K]}$ (i.e., $\bm{\delta}_{[0,K]}$ is a $\Sigma$-admissible activator sequence) and the pair $(\bm{\sigma}_{[0,K]},\bm{u}_{[0,K]})$ is $\Sigma$-admissible.
\end{defn}

Note that while the actuator state $\sigma_k$ differentiates between the SAcSS having an operational actuator mode activated $(\sigma_k\in\Qs)$ or being in transition between two modes $(\sigma_k\in\Es)$, the activator $\delta_k$ effectively lumps these two actuator states together due to the function $\Delta$ considering only the destination mode $\post(\sigma_k)$. This property is instrumental in reducing the number of required Boolean variables in the model.
That is, explicitly modeling also the states $\sigma_k\in\Es$, as in a constrained switched (linear) system \cite{Liberzon2003,Philippe2016} in mixed logical dynamical (MLD) form \cite{Bemporad1999} for instance, would require additional Boolean variables. Moreover, since the switching is to be determined by MPC, where a performance-related stage cost must be evaluated at each prediction instant, switches $(q,\tilde{q})\in\Es$ with $s(q,\tilde{q})>1$ even require multiple additional Boolean variables to model the different levels of switching progress, which is referred to as lifting \cite{Subramanian2012}. As a result, such approaches require $\sum_{(q,\tilde{q})\in\Es}s_{q\tilde{q}}$ Boolean variables (which grows for increasing setup times) in addition to the $N_q$ Boolean variables that model the operational actuator states $\sigma_k\in\Qs$.
By contrast, our approach leads to a more compact model using only the $N_q$ Boolean variables in the activator, independent of the setup times. Since, however, the activator $\delta_K$ at some time $K\in\Nset$ does not indicate the actuator state $\sigma_K$ directly, see \eqref{mip:eq:delta}, we instead infer $\sigma_K$, and thereby the admissible input $u_K$ according to \defref{mip:def:admissiblesigmau}, from a \emph{sequence} of current and past activators in $\bm{\delta}_{[0,K]}$. Note that this essentially exchanges the need for many Boolean variables per time instant against the need for a few Boolean variables over multiple instants, and that an activator $\delta_k$, $k\in\Nset$, is required to determine $\sigma_l$ at multiple (equal and later) times $l\in\Nset_{\geq k}$. In the context of MPC, however, the decision variables such as the activators must be considered at all time instants in the prediction horizon anyway. Hence, including only the $N_q$-dimensional Boolean activator per time step, and effectively ``reusing'' each (predicted) activator to determine multiple (predicted) actuator states, leads to an MIP problem with significantly less Boolean variables, thereby improving efficiency.

\subsubsection{Admissibility conditions}

Next, we formulate explicit conditions on an input sequence $\bm{u}_{[0,K]}$ in relation to an activator sequence $\bm{\delta}_{[0,K]}$ by which the pair $(\bm{\delta}_{[0,K]},\bm{u}_{[0,K]})$ is admissible. To this end, however, we must
first introduce
\begin{equation}\label{mip:eq:setupmodeset}
  \Qs_{<\tau}^{q} = \{ \tilde{q}\in\Qs \mid s_{\tilde{q}q} < \tau \}, \quad \tau\in\Nset_{>0},\ q\in\Qs,
\end{equation}
denoting the set of modes $\tilde{q}\in\Qs$ from which the switch to mode $q\in\Qs$ induces a setup time smaller than $\tau$ instants. Note that $q\in\Qs_{<1}^q$ as $s_{qq}=0$ by definition, $\Qs_{<\tau_1}^q \subseteq \Qs_{<\tau_2}^q$ for positive times $\tau_1<\tau_2$, and $\Qs_{<\tau}^q=\Qs$ for all $\tau>\max_{\tilde{q}\in\Qs}s_{\tilde{q}q}$
(since $\Gamma$ is complete and satisfies the triangle inequality).
Using \eqref{mip:eq:setupmodeset}, we can
formulate the admissibility conditions as in the following theorem.

\begin{thm}\label{mip:thm:admissibledeltau}
  Given an admissible activator sequence $\bm{\delta}_{[0,K]}$, the pair $(\bm{\delta}_{[0,K]},\bm{u}_{[0,K]})$ is admissible if and only if for all $k\in\Nset_{[0,K]}$ it holds that $u_k^{\bar{q}}=0_{n_u^{\bar{q}}}$ for all $\bar{q}\in\Qs\setminus\{q\}$, where $q\in\Qs$ satisfies
  \begin{enumerate}[(I)]
    \item\label{mip:thm:conditionk} $\delta_k^{q}=1$, and,
    \item\label{mip:thm:conditionkmintau} for all $\tau\in\Nset_{[1,\min\{k,\overline{s}_q\}]}$,
    there exists $\tilde{q}\in\Qs_{<\tau}^{q}$ such that $\delta_{k-\tau}^{\tilde{q}}=1$.
  \end{enumerate}
\end{thm}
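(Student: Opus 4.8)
The plan is to collapse the theorem into a single pointwise characterization of the \emph{operational} actuator mode and to prove that characterization using the run structure of admissible sequences together with the triangle inequality on $\Gamma$. First I would unpack $\Sigma$-admissibility of $(\bm{\delta}_{[0,K]},\bm{u}_{[0,K]})$ via \defref{mip:def:admissibledeltau}: since $\bm{\delta}_{[0,K]}$ is admissible there exists a compatible admissible actuator sequence $\bm{\sigma}_{[0,K]}$, and admissibility of the pair amounts to $(\bm{\sigma}_{[0,K]},\bm{u}_{[0,K]})$ being admissible in the sense of \defref{mip:def:admissiblesigmau}, i.e. $u_k^{\bar q}=0$ for every channel outside $\Phi(\sigma_k)$. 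By \defref{mip:def:sacss} and \assref{mip:ass:modallydisjointinputs}, $\Phi(\sigma_k)$ is the channel block of mode $q$ when $\sigma_k=q\in\Qs$ and is empty when $\sigma_k\in\Es$. Hence the whole statement reduces to one \emph{key claim}: for each $k$, $\sigma_k=q$ is operational (with $q\in\Qs$) if and only if conditions (I) and (II) hold for that $q$; and when $\sigma_k\in\Es$ the unique index with $\delta_k^{\,\cdot}=1$ fails (II), so no $q$ meets both and all inputs must vanish, matching $\Phi(\sigma_k)=\emptyset$. Because this characterization is expressed in $\bm{\delta}_{[0,K]}$ alone, it also shows the input requirement is independent of which compatible $\bm{\sigma}_{[0,K]}$ is selected, so any non-uniqueness is harmless.

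Next I would analyse the local structure of an admissible $\bm{\sigma}_{[0,K]}$. Using \defref{mip:def:admissiblesigma}, a maximal run of $\delta^q=1$ (equivalently $\post(\sigma_l)=q$) beginning at some index $a$ decomposes into an initial transition segment $\sigma_l=(\tilde q,q)$ for $l\in\Nset_{[a,a+s_{\tilde q q}-1]}$ followed by operational instants $\sigma_l=q$; here the switch into $q$ starts at $a$ from an operational predecessor $\sigma_{a-1}=\tilde q\in\Qs$, which condition (B) of \defref{mip:def:admissiblesigma} guarantees. For the converse (via contraposition), if $\sigma_k\in\Es$ transitions toward $q$ then $k\in\Nset_{[a,a+s_{\tilde q q}-1]}$, and I would exhibit the witness $\tau_0=k-a+1\in\Nset_{[1,s_{\tilde q q}]}$: at time $k-\tau_0=a-1$ the one-hot activator has only $\tilde q$ active, and $s_{\tilde q q}\geq\tau_0$ gives $\tilde q\notin\Qs_{<\tau_0}^{q}$, so condition (II) fails; since $\tau_0\leq\min\{k,\overline{s}_q\}$ this failure lies in the checked range. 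Thus (I) together with (II) force $\sigma_k=q$ to be operational.

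For the forward direction—assuming $\sigma_k=q$ operational and verifying (II) for every $\tau\in\Nset_{[1,\min\{k,\overline{s}_q\}]}$—the crux is a minimum-time estimate, which I expect to be the main obstacle. Writing $\hat q=\post(\sigma_{k-\tau})$, the case $\hat q=q$ is immediate because $s_{qq}=0<\tau$. For $\hat q\neq q$ I would prove the sub-lemma: any admissible segment from a state with destination $\hat q$ at time $t_0$ to an operational $q$ at time $t_1$ satisfies $t_1-t_0\geq s_{\hat q q}+1$. The argument lists the operational modes $n_0=\hat q,\dots,n_L=q$ visited in order; each operational-to-operational switch $(n_i,n_{i+1})$ consumes at least $s_{n_i n_{i+1}}+1$ instants, so telescoping from the \emph{last} operational instant $d_0\geq t_0$ of the stint $n_0$ gives $t_1\geq d_0+L+\sum_i s_{n_i n_{i+1}}$, and the triangle inequality collapses the sum to $\sum_i s_{n_i n_{i+1}}\geq s_{\hat q q}$ while $L\geq 1$. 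Applying this with $t_0=k-\tau$ and $t_1=k$ yields $\tau\geq s_{\hat q q}+1$, i.e. $s_{\hat q q}<\tau$ and $\hat q\in\Qs_{<\tau}^{q}$, so together with $\delta_{k-\tau}^{\hat q}=1$ condition (II) holds.

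The delicate points in this sub-lemma are getting the per-switch step count exactly right (transition length $s_{n_i n_{i+1}}$ plus one instant to become operational, valid also for zero setup time) and anchoring the telescoping at $d_0$ rather than the start of the stint, so that the bound survives when the segment dwells in $\hat q$ before switching. Completeness of $\Gamma$ and its triangle inequality (\remref{mip:rem:completeandtriangle}) are precisely what legitimise collapsing any indirect route down to the direct setup time $s_{\hat q q}$; without them the minimum-time bound, and hence the equivalence between the activator-based conditions (I)–(II) and the actual operational status, would break.
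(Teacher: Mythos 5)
Your proof is correct and follows essentially the same route as the paper's: reduce admissibility of the pair to the pointwise characterization $\sigma_k=q$ iff (\ref{mip:thm:conditionk}) and (\ref{mip:thm:conditionkmintau}) hold, prove the forward direction via a minimum-travel-time bound resting on completeness and the triangle inequality of $\Gamma$, and prove the converse by exhibiting a violating $\tau$ (your $\tau_0=k-a+1$, the paper's $\tau$ with $\post(\sigma_{k-\tau})=\hat{q}\notin\Qs_{<\tau}^q$) when $\sigma_k\in\Es$. The only substantive difference is that you make the minimum-time bound explicit as a telescoping sub-lemma over the visited operational modes, whereas the paper asserts the corresponding inclusion \eqref{mip:eq:lemproofsigma} directly ``by admissibility''; your version is a sharper justification of that same step, not a different argument.
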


\begin{proof}
  By admissibility of $\bm{\delta}_{[0,K]}$ there exists a corresponding admissible actuator sequence $\bm{\sigma}_{[0,K]}$. For this $\bm{\sigma}_{[0,K]}$, the input sequence $\bm{u}_{[0,K]}$ is restricted exactly to its admissible set as in \defref{mip:def:admissiblesigmau} by $u_k^{\bar{q}}=0_{n_u^{\bar{q}}}$ for all $\bar{q}\in\Qs\setminus\{\sigma_k\}$ and all $k\in\Nset_{[0,K]}$. Hence, \thmref{mip:thm:admissibledeltau} can be proven by showing that the conditions~\eqref{mip:thm:conditionk}-\eqref{mip:thm:conditionkmintau} hold if and only if $\sigma_k=q$, as will be done in the remainder of this proof. To this end, first note that since $\Gamma$ is complete and satisfies the triangle inequality, we have that given some $\tilde{q}\in\Qs\setminus\Qs_{<\tau}^q$ there exists no admissible actuator sequence connecting $\tilde{q}$ to $q$ in less than $\tau$ steps.
  Consequently, inspecting which admissible past actuator sequences can lead to $\sigma_k=q$ can be done entirely on the basis of $\Qs_{<\tau}^q$, and thereby on only the actuator sequences describing ``direct'' switches to $q$, while disregarding all actuator sequences by which the SAcSS reaches $\sigma_k=q$ ``indirectly'' (i.e., by first switching to some other mode before finally switching to $q$).

  \emph{(Only if.)} We will first show that for an admissible $\bm{\sigma}_{[0,K]}$ and corresponding $\bm{\delta}_{[0,K]}$, the equality $\sigma_k=q$ implies conditions~\eqref{mip:thm:conditionk}-\eqref{mip:thm:conditionkmintau} in the theorem. By admissibility of $\bm{\delta}_{[0,K]}$, we can use \eqref{mip:eq:delta} to find
  \begin{equation}
    \sigma_k = q\ \Rightarrow\ \delta_k^q = \Delta_q(\sigma_k) = 1,
  \end{equation}
  and thus condition~\eqref{mip:thm:conditionk} is satisfied. Next, by admissibility of $\bm{\sigma}_{[0,K]}$, $\sigma_k=q$ implies that
  \begin{equation}\label{mip:eq:lemproofsigma}
    \sigma_{k-\tau} \in \Qs_{<\tau}^q \cup \{(\tilde{q},\bar{q})\in\Es \mid \bar{q}\in\Qs_{<\tau}^q\}%\ \text{for all}\ \tau\in\Nset_{[1,K]},
  \end{equation}
  for all $\tau\in\Nset_{[1,k]}$,
  %reflecting the fact that if for an admissible actuator sequence it holds that $\sigma_K=q$, then
  i.e, the SAcSS must at all times $l=k-\tau$, $\tau\in\Nset_{[1,k]}$, have been in or transitioning towards a mode from which the switch to mode $q$ would induce a setup time of less than $\tau$ time instants (as otherwise the SAcSS would not be able to reach mode $q$ at time $k$). By admissibility of $\bm{\delta}_{[0,K]}$, we can apply the function $\Delta$ from \eqref{mip:eq:delta} to find that \eqref{mip:eq:lemproofsigma} implies that for all $\tau\in\Nset_{[1,k]}$ there exists $\tilde{q}\in\Qs_{<\tau}^q$ such that $\delta_{k-\tau}^{\tilde{q}}=1$,
  and thus condition~\eqref{mip:thm:conditionkmintau} holds.

  \emph{(If.)} By admissibility of $\bm{\delta}_{[0,K]}$, we use \eqref{mip:eq:delta} to derive from condition~\eqref{mip:thm:conditionk}
  \[
    \delta_k^q=1\ \Leftrightarrow\ \post(\sigma_k)=q\ \Leftrightarrow\ \sigma_k\in\{q\}\cup\{(\hat{q},q)\in\Es\},
  \]
  i.e., the SAcSS was either in or switching towards mode $q$ at time $k$. We will complete this proof by showing that condition~\eqref{mip:thm:conditionkmintau} excludes the latter possibility. To this end, note that for any admissible actuator sequence with $\sigma_k=(\hat{q},q)\in\Es$, it must hold that $\post(\sigma_{k-\tau})=\hat{q}\in\Qs\setminus\Qs_{<\tau}^q$ for some $\tau\in\Nset_{[1,\min\{k,s_{\hat{q}q}\}]}$, followed by $\sigma_l=(\hat{q},q)$ for all $l\in\Nset_{[k-\tau+1,k]}$ (describing that the SAcSS was in or arriving at mode $\hat{q}$ at time $k-\tau$, and that the switch $(\hat{q},q)$ started at time $k-\tau+1$ and is not yet completed at time $k$). Using \eqref{mip:eq:deltavec} by admissibility of $\bm{\delta}_{[0,K]}$, this is equivalent to $\delta_{k-\tau}^{\hat{q}}=1$ for some $\tau\in\Nset_{[1,\min\{k,s_{\hat{q}q}\}]}$ and $\hat{q}\notin\Qs_{<\tau}^q$ (and $\delta_l^q=1$ for all $l\in\Nset_{[k-\tau+1,k]}$).
  Using \eqref{mip:eq:deltasum}, however, we find that this is contradicted by condition~\eqref{mip:thm:conditionkmintau}, which completes the proof.
\end{proof}

To further clarify \thmref{mip:thm:admissibledeltau}, and in particular the first (only if) part of its proof, consider \exmref{mip:exm:equivalentsigmadelta}.

\begin{exmp}\label{mip:exm:equivalentsigmadelta}
  Consider a SAcSS with $\Gamma$ as in \exmref{mip:exm:sacssautomaton}. For an admissible actuator sequence with $\sigma_k=1$ for some $k\in\Nset_{[4,K]}$, the possible actuator states at the preceding time instants are depicted in \figref{mip:fig:automaton4_exm}, where the possible actuator states at each time are highlighted (black). If $\sigma_k=1$ (\figref{mip:fig:automaton4_exmK}), then at time $k-1$ by admissibility the SAcSS must have either already been in mode 1 or in the last time instant of switching towards this mode, i.e., $\sigma_{k-1}\in\{1\}\cup\{(q,1)\in\Es\}$ (\figref{mip:fig:automaton4_exmKm1}). Consequently, taking the setup times into account, at time $k-2$ the SAcSS may have been in modes 1 or 3 or in the last step of switching towards these modes, i.e., $\sigma_{k-1}\in\{1,3\}\cup\{(q,\tilde{q})\in\Es\mid\tilde{q}\in\{1,3\}\}$ (\figref{mip:fig:automaton4_exmKm2}), by which in turn at time $k-3$ (and all preceding times) the SAcSS may have been in any actuator state, i.e., $\sigma_{k-3}\in\Qs\cup\Es$ (\figref{mip:fig:automaton4_exmKm3}).
  \begin{figure}[!t]
  \begin{subfigure}[b]{0.5\textwidth}
    %\vspace{3mm}
    \centering
    \includegraphics[width=0.7\textwidth]{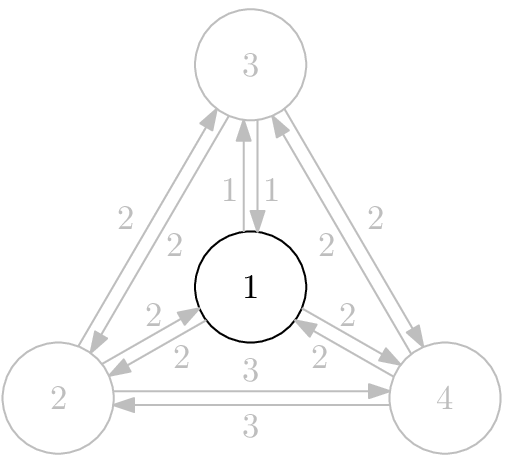}
    \caption{Time $k$: $\sigma_k=1$}
    \label{mip:fig:automaton4_exmK}
  \end{subfigure}
  \begin{subfigure}[b]{0.5\textwidth}
    %\vspace{3mm}
    \centering
    \includegraphics[width=0.7\textwidth]{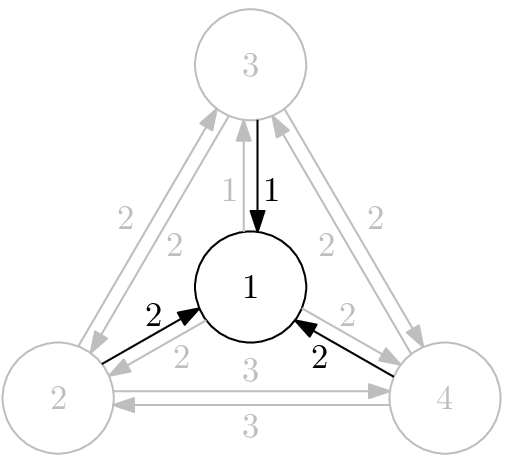}
    \caption{Time $k-1$: $\Qs_{<1}^1=\{1\}$}
    \label{mip:fig:automaton4_exmKm1}
  \end{subfigure}\\[1mm]
  \begin{subfigure}[b]{0.5\textwidth}
    %\vspace{3mm}
    \centering
    \includegraphics[width=0.7\textwidth]{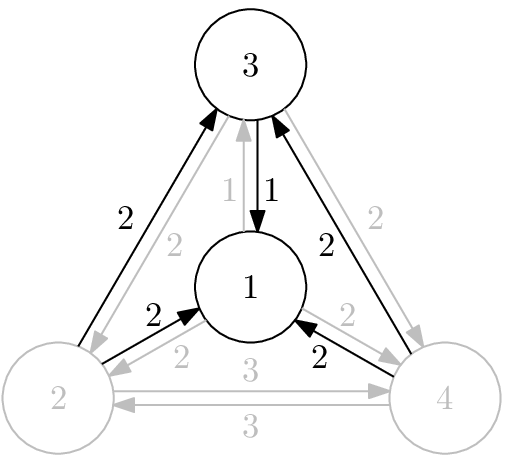}
    \caption{Time $k-2$: $\Qs_{<2}^1=\{1,3\}$}
    \label{mip:fig:automaton4_exmKm2}
  \end{subfigure}
  \begin{subfigure}[b]{0.5\textwidth}
    %\vspace{3mm}
    \centering
    \includegraphics[width=0.7\textwidth]{figures/automaton4_sacss.eps}
    \caption{Time $k-3$: $\Qs_{<3}^1=\Qs$}
    \label{mip:fig:automaton4_exmKm3}
  \end{subfigure}
  \caption{Possible actuator states (black) at different times, given an admissible actuator sequence with $\sigma_k=1$.}
  \label{mip:fig:automaton4_exm}
  \end{figure}%
  Now, realize that this in fact demonstrates \eqref{mip:eq:lemproofsigma}, which is the key result in the ``only if'' part of the proof of \thmref{mip:thm:admissibledeltau}.%, since for mode 1 we find $\Qs_{<1}^1=\{1\}$, $\Qs_{<2}^1=\{1,3\}$, and $\Qs_{<3}^1=\Qs$.
\end{exmp}

\subsection{Mixed-integer linear inequality constraints}\label{mip:subsec:MIconstraints}

In this subsection, we use \thmref{mip:thm:admissibledeltau} to formulate the mixed-integer linear inequality constraints (in terms of the inputs and activators) that incorporate the actuator switching behavior in a manner that ensures admissibility of the pair $(\bm{\delta}_{[0,K]},\bm{u}_{[0,K]})$ for a \emph{given} $\bm{\delta}_{[0,K]}$. Then, we discuss a significant simplification of these constraints for systems with one-sided controls.

\subsubsection{General form}

Recall that $\underline{u}^q$ and $\overline{u}^q$ denote the lower and upper input bounds of the inputs in mode $q\in\Qs$ as in \eqref{mip:eq:inputboundsmodal}. For some $k\in\Nset$, the linear inequality constraints that ensure that $u_k^{\tilde{q}}=0_{n_u^{\tilde{q}}}$ for all $\tilde{q}\in\Qs\setminus\{q\}$, where $q\in\Qs$ satisfies condition~\eqref{mip:thm:conditionk} of \thmref{mip:thm:admissibledeltau}, are then given by
\begin{equation}\label{mip:eq:setuptimeconstraintkmode}
  \delta_k^{q}\underline{u}^q \leq u_k^q \leq \delta_k^{q}\overline{u}^q\quad \text{for all}\ q\in\Qs,
\end{equation}
indeed allowing for nonzero $u_k^q\in\Us_q$ only when $\delta_k^q=1$. Collecting all modal input bounds \eqref{mip:eq:setuptimeconstraintkmode} results in the constraints on $u_k$ corresponding to condition~\eqref{mip:thm:conditionk} to read
\ifthesis
{\color{black}
\begin{subequations}\label{mip:eq:setuptimeconstraintk}
\begin{equation}%\label{mip:eq:setuptimeconstraintkinequality}
  \underline{U}_0\delta_k \leq u_k \leq \overline{U}_0\delta_k,
\end{equation}
where the $n_u\times N_q$-dimensional block diagonal input constraint matrices are given by
\begin{equation}%\label{mip:eq:setuptimeconstraintkmatrixfull}
  \underline{U}_0 = \begin{bmatrix} \underline{u}^1 & & \\ & \ddots & \\ & & \underline{u}^{N_q} \end{bmatrix}, \quad \overline{U}_0 = \begin{bmatrix} \overline{u}^1 & & \\ & \ddots & \\ & & \overline{u}^{N_q} \end{bmatrix}.
\end{equation}
\end{subequations}
}%
\else
\begin{equation}\label{mip:eq:setuptimeconstraintk}
  \underline{U}_0\delta_k \leq u_k \leq \overline{U}_0\delta_k,
\end{equation}
where the $n_u\times N_q$-dimensional block diagonal input constraint matrices are given by $\underline{U}_0 = \diag(\underline{u}^1,\dots,\underline{u}^{N_q})$ and $\overline{U}_0 = \diag(\overline{u}^1,\dots,\overline{u}^{N_q})$.
\fi
Note that the products $\underline{U}_0\delta_k$ and $\overline{U}_0\delta_k$ contain mostly zeros, except for the entries corresponding to $\delta_k^q=1$.

By the same reasoning, the input restrictions resulting from condition~\eqref{mip:thm:conditionkmintau} of \thmref{mip:thm:admissibledeltau} for some $k\in\Nset$ can equivalently be expressed as requiring for all $q\in\Qs$ that
\begin{equation}\label{mip:eq:setuptimeconstraintkmintaumode}
  \sum_{\tilde{q}\in\Qs_{<\tau}^q}\delta_{k-\tau}^{\tilde{q}}\underline{u}^q \leq u_k^q \leq \sum_{\tilde{q}\in\Qs_{<\tau}^q}\delta_{k-\tau}^{\tilde{q}}\overline{u}^q\quad \text{for all}\ \tau\in\Nset_{[1,\min\{k,\overline{s}_q\}]}.
\end{equation}
To automate the generation of the inequality constraints \eqref{mip:eq:setuptimeconstraintkmintaumode} for the entire input vector, we first use the setup time matrix $S$ from \eqref{mip:eq:setuptimematrix} to construct the setup time constraint matrices
\begin{subequations}\label{mip:eq:Stau}
\begin{equation}
  S_\tau \coloneqq \begin{bmatrix} s_{\tau,11} & \cdots & s_{\tau,1N_q} \\ \vdots & \ddots & \vdots \\ s_{\tau,N_q1} & \cdots & s_{\tau,N_qN_q} \end{bmatrix} \in \{0,1\}^{N_q\times N_q}, \quad \tau=1,\ldots,\overline{s},
\end{equation}
with largest setup time $\overline{s}\coloneqq\max_{(q,\tilde{q})\in\Es}s_{q\tilde{q}}$, and
\begin{equation}
  s_{\tau,q\tilde{q}} \coloneqq \left\{\begin{array}{ll} 1, &\text{if } s_{q\tilde{q}}<\tau, \\ 0, &\text{otherwise}. \end{array}\right.
\end{equation}
\end{subequations}
The input constraints from \eqref{mip:thm:conditionkmintau} are then given by
\begin{subequations}\label{mip:eq:setuptimeconstraintkmintau}
\begin{equation}\label{mip:eq:setuptimeconstraintkmintauinequality}
  \underline{U}_\tau\delta_{k-\tau} \leq u_k \leq \overline{U}_\tau\delta_{k-\tau}\ \text{for all}\ \tau\in\Nset_{[1,\min\{k,\overline{s}\}]},
\end{equation}
with input constraint matrices
\begin{align}
  \underline{U}_\tau &= \begin{bmatrix} s_{\tau,11}\underline{u}^1 & \cdots & s_{\tau,N_q1}\underline{u}^1 \\ \vdots & \ddots & \vdots \\ s_{\tau,1N_q}\underline{u}^{N_q} & \cdots & s_{\tau,N_qN_q}\underline{u}^{N_q} \end{bmatrix}, \label{mip:eq:setuptimeconstraintkmintaumatricxfullmin} \\
  \overline{U}_\tau &= \begin{bmatrix} s_{\tau,11}\overline{u}^1 & \cdots & s_{\tau,N_q1}\overline{u}^1 \\ \vdots & \ddots & \vdots \\ s_{\tau,1N_q}\overline{u}^{N_q} & \cdots & s_{\tau,N_qN_q}\overline{u}^{N_q} \end{bmatrix}. \label{mip:eq:setuptimeconstraintkmintaumatricxfullmax}
\end{align}
\end{subequations}

In summary, \eqref{mip:eq:setuptimeconstraintk} and \eqref{mip:eq:setuptimeconstraintkmintau} translate the conditions in \thmref{mip:thm:admissibledeltau} into mixed-integer linear inequality constraints that are directly suitable for MI-MPC. Precisely, if we have an admissible pair $(\bm{\delta}_{[0,K]},\bm{u}_{[0,K]})$, then this pair satisfies the constraints in \eqref{mip:eq:setuptimeconstraintk} and \eqref{mip:eq:setuptimeconstraintkmintau}. Moreover, these constraints are constructed automatically on the basis of $\underline{u}$, $\overline{u}$, $\Phi$, and $S$, and result in $n_u(\overline{s}+1)$ (two-sided) linear inequalities.

\begin{rem}\label{mip:rem:redundantconstraints}
  In case of non-uniform maximum setup times per mode, i.e., if $\overline{s}_q\neq \overline{s}_{\tilde{q}}$ for some $q,\tilde{q}\in\Qs$ and thus $\overline{s}_q<\overline{s}$ for some $q\in\Qs$, the number of inequality constraints can be reduced by omitting the constraints that follow from all-ones columns in the matrices $S_\tau$, $\tau=1,\ldots,\overline{s}$. That is, these constraints are a byproduct of automatically constructing \eqref{mip:eq:setuptimeconstraintkmintaumode} in the form of \eqref{mip:eq:setuptimeconstraintkmintau}, and in fact correspond to \eqref{mip:eq:setuptimeconstraintkmintaumode} for $\tau>\overline{s}_q$, which are not included in condition~\eqref{mip:thm:conditionkmintau} (and which are satisfied by definition, see \remref{mip:rem:redundantconstraints}). In many numerical solvers, however, these constraints are automatically removed using a presolve algorithm, which eliminates the need to explicitly remove them during our modeling procedure.
\end{rem}

\begin{rem}
  If the system dynamics \eqref{mip:eq:sacss} are linear, then the combination of \eqref{mip:eq:sacss} with the linear inequality constraints \eqref{mip:eq:setuptimeconstraintk} and \eqref{mip:eq:setuptimeconstraintkmintau} in terms of continuous and Boolean variables in fact describes a SAcSS as an MLD system \cite{Bemporad1999}.
  Contrary to the general-purpose discrete hybrid automaton (DHA) framework (and accompanying modeling language HYSDEL which automates the conversion of a DHA into MLD form) \cite{Torrisi2004}, our method is tailored to SAcSSs. As a result, it allows for easier high-level specification of SAcSSs, and enables the automated generation of integer-wise compact (MLD) models.
\end{rem}

\subsubsection{Simplifications for one-sided control}\label{mip:subsubsec:constraintsonesided}

The number of input constraints in \eqref{mip:eq:setuptimeconstraintkmintau} may be reduced in case of one-sided input channels, i.e., when $\underline{u}_i=0$ or $\overline{u}_i=0$ for some $i\in\Nset_{[1,n_u]}$. This is the case in MR-HIFU hyperthermia, for example, where due to physical limitations heat cannot actively be removed from the system, and thus $u_{k,i}\geq\underline{u}_i=0$ for all $k\in\Nset$. For ease of exposition, we discuss here the simplification for the case that all inputs are nonnegative, i.e., $\underline{u}=0_{n_u}$. All other cases can be treated mutatis mutandis.

The key property that enables the reduction of the number of inequality constraints is that by the one-sidedness of the controls, the logic of condition~\eqref{mip:thm:conditionkmintau} can be imposed on the \emph{sum} of the one-sided inputs per mode, as opposed to on the individual input elements. That is, if $\underline{u}^q=0$ for all $q\in\Qs$, and additionally \eqref{mip:eq:setuptimeconstraintkmode} is imposed to restrict the individual input elements, then \eqref{mip:thm:conditionkmintau} (and thereby \eqref{mip:eq:setuptimeconstraintkmintaumode}) is equivalent to requiring for all $q\in\Qs$ that
\begin{equation*}
  1_{n_u^q}^\top u_k^q \leq \sum_{\tilde{q}\in\Qs_{<\tau}^q}\delta_{k-\tau}^{\tilde{q}} 1_{n_u^q}^\top \overline{u}^q\quad \text{for all}\ \tau\in\Nset_{[1,\min\{k,\overline{s}_q\}]}.
\end{equation*}
Consequently, we may replace \eqref{mip:eq:setuptimeconstraintkmintauinequality} by the inequalities
\ifthesis
{\color{black}
\begin{subequations}\label{mip:eq:setuptimeconstraintkmintausum}
\begin{equation}\label{mip:eq:setuptimeconstraintkmintausuminequality}
  J_{u}u_k \leq J_{u}\overline{U}_\tau\delta_{k-\tau}\ \text{for all}\ \tau\in\Nset_{[1,\min\{k,\overline{s}\}]},
\end{equation}
where $J_{u}$ represents the block diagonal matrix
\begin{equation}%\label{mip:eq:setuptimeconstraintkmintausumI}
  J_{u} = \begin{bmatrix} 1_{n_u^1}^\top & & \\ & \ddots & \\ & & 1_{n_u^{N_q}}^\top \end{bmatrix} \in \{0,1\}^{N_q\times n_u}.
\end{equation}
\end{subequations}
}%
\else
\begin{equation}\label{mip:eq:setuptimeconstraintkmintausum}
  J_{u}u_k \leq J_{u}\overline{U}_\tau\delta_{k-\tau}\quad \text{for all}\ \tau\in\Nset_{[1,\min\{k,\overline{s}\}]},
  %J_{u}\underline{U}_\tau\delta_{k-\tau} \leq J_{u}u_k \leq J_{u}\overline{U}_\tau\delta_{k-\tau}%\ J_{u}\overline{U}_\tau\delta_{k-\tau}%\ \forall\ \tau\in\Nset_{[1,\min\{k,\overline{s}_q\}]},
\end{equation}
%for all $\tau\in\Nset_{[1,\min\{k,\overline{s}_q\}]}$,
with block diagonal matrix $J_{u} = \diag(1_{n_u^1}^\top,\dots,1_{n_u^{N_q}}^\top)\in\{0,1\}^{N_q\times n_u}$.
\fi
Together, \eqref{mip:eq:setuptimeconstraintk} and \eqref{mip:eq:setuptimeconstraintkmintausum} comprise the reduced set of inequality constraints that incorporate the SAcSS' switching including setup times.

\begin{rem}\label{mip:rem:setuptimeconstraintkmintausumuniform}
  Note that if, in addition, all modes have a common bound on the sum of their inputs, i.e., $1_{n_u^q}^\top u^q\leq\overline{u_\Sigma}$ with $\overline{u_\Sigma}\in\Rset_{>0}$ must hold for all $q\in\Qs$, then \eqref{mip:eq:setuptimeconstraintkmintausum} can be written as
  \begin{equation}\label{mip:eq:setuptimeconstraintkmintausumuniform}
    J_{u}u_k \leq \overline{u_\Sigma}S^\top_\tau\delta_{k-\tau}\quad \text{for all}\ \tau\in\Nset_{[1,\min\{k,\overline{s}\}]}.
  \end{equation}
\end{rem}

Summarizing, the $n_u\overline{s}$ two-sided inequality constraints in \eqref{mip:eq:setuptimeconstraintkmintauinequality} can be replaced by only $N_q\overline{s}$ single inequalities in \eqref{mip:eq:setuptimeconstraintkmintausum} (or \eqref{mip:eq:setuptimeconstraintkmintausumuniform}), achieving a reduction of $(2n_u-N_q)\overline{s}$ single inequalities. Especially for SAcSSs with high input dimension and only few modes, such as in MR-HIFU hyperthermia, and in the case of large maximum setup time $\overline{s}$, this leads to a significant reduction.

\begin{rem}\label{mip:rem:simplificationuminmax}
  In many numerical solvers, one needs to specify the upper and lower bounds of the search space for the optimization variables, by which the constraints $\underline{u}\leq u_k\leq\overline{u}$ are already included. Then, the above simplification procedure may also be applied to \eqref{mip:eq:setuptimeconstraintk}, yielding
  \begin{equation}\label{mip:eq:remsimplificationuminmax}
    J_uu_k \leq \overline{u_\Sigma}\delta_k.
  \end{equation}
  In this case, the $n_u(\overline{s}+1)$ two-sided inequalities in \eqref{mip:eq:setuptimeconstraintk} and \eqref{mip:eq:setuptimeconstraintkmintau} are reduced to the $N_q(\overline{s}+1)$ one-sided inequalities in \eqref{mip:eq:setuptimeconstraintkmintausum} and \eqref{mip:eq:remsimplificationuminmax}, resulting in a reduction of $(2n_u-N_q)(\overline{s}+1)$ constraints.
\end{rem}

\subsection{Admissibility assurance}

The inequalities \eqref{mip:eq:setuptimeconstraintk} and \eqref{mip:eq:setuptimeconstraintkmintau} (or, if applicable, \eqref{mip:eq:remsimplificationuminmax} and \eqref{mip:eq:setuptimeconstraintkmintausum} or \eqref{mip:eq:setuptimeconstraintkmintausumuniform}) provide a set of admissibility conditions on $\bm{u}_{[0,K]}$ in relation to $\bm{\delta}_{[0,K]}$. Being based on \thmref{mip:thm:admissibledeltau}, these constraints rely on the assumption that the activator sequence $\bm{\delta}_{[0,K]}$ is \emph{admissible}. To ensure the admissibility of $\bm{\delta}_{[0,K]}$ \emph{a priori} in the context of MPC, where the pair $(\bm{\delta}_{[0,K]},\bm{u}_{[0,K]})$ is to be determined by optimization, one could try to impose restrictions on $\bm{\delta}_{[0,K]}$. However, for compatibility with MI-MPC, this must then be done using additional mixed-integer linear inequality constraints, of which the derivation is not straightforward and due to linearization may require the introduction of many auxiliary integer variables. To avoid the need for this, we instead opt to ensure the admissibility of $\bm{\delta}_{[0,K]}$ \emph{a posteriori}. In particular, we will use only \eqref{mip:eq:setuptimeconstraintk} and \eqref{mip:eq:setuptimeconstraintkmintau} directly in the MPC to obtain a pair $(\bm{\tilde{\delta}}_{[0,K]},\bm{u}_{[0,K]})$ that satisfies the conditions in \thmref{mip:thm:admissibledeltau}, where $\bm{\tilde{\delta}}_{[0,K]}$ need not be admissible. Subsequently, on the basis of $(\bm{\tilde{\delta}}_{[0,K]},\bm{u}_{[0,K]})$ we construct (using a simple procedure) a potentially different activator sequence $\bm{\delta}_{[0,K]}$ such that $(\bm{\delta}_{[0,K]},\bm{u}_{[0,K]})$ is admissible for the same $\bm{u}_{[0,K]}$. We call any such pair $(\bm{\tilde{\delta}}_{[0,K]},\bm{u}_{[0,K]})$ \emph{$\Sigma$-feasible} (or \emph{feasible} if $\Sigma$ is clear from context), as defined next.

\begin{defn}\label{mip:def:feasibledeltau}
  A pair $(\bm{\delta}_{[0,K]},\bm{u}_{[0,K]})$, $K\in\Nset$, is called \emph{$\Sigma$-feasible} if it satisfies the conditions on $\bm{u}_{[0,K]}$ in relation to $\bm{\delta}_{[0,K]}$ given in \thmref{mip:thm:admissibledeltau}.
\end{defn}

The construction of an alternative sequence $\bm{\delta}_{[0,K]}$ such that the pair $(\bm{\delta}_{[0,K]},\bm{u}_{[0,K]})$ is $\Sigma$-admissible, given a $\Sigma$-feasible pair $(\bm{\tilde{\delta}}_{[0,K]},\bm{u}_{[0,K]})$, is based on the following theorem.

\begin{thm}\label{mip:thm:admissassurance}
  Let a $\Sigma$-feasible pair $(\bm{\tilde{\delta}}_{[0,K]},\bm{u}_{[0,K]})$ be given. Define $\Ks = \{k_1,\ldots,k_L\} = \{k\in\Nset_{[0,K]} \mid u_k\neq0\}$, $L\in\Nset$, where $k_l<k_{l+1}$ for all $l\in\Nset_{[1,L-1]}$, and construct the activator sequence $\bm{\delta}_{[0,K]}$ as
  \begin{equation}\label{mip:eq:thmadmissassurance}
    \delta_k = \left\{
               \begin{array}{ll}
                 \tilde{\delta}_{k_1},  & \text{if}\ k\in\Nset_{[0,k_1]}, \\
                 \tilde{\delta}_{k_l},  & \text{if}\ k\in\Nset_{[k_{l-1}+1,k_l]},\ l\in\Nset_{[2,L]}, \\
                 \tilde{\delta}_{k_L},  & \text{if}\ k\in\Nset_{[k_L+1,K]}.
               \end{array}
               \right.
  \end{equation}
  Then, the pair $(\bm{\delta}_{[0,K]},\bm{u}_{[0,K]})$ is $\Sigma$-admissible.
\end{thm}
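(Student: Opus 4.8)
The plan is to exhibit, for the sequence $\bm{\delta}_{[0,K]}$ built in \eqref{mip:eq:thmadmissassurance}, an explicit $\Sigma$-admissible actuator sequence $\bm{\sigma}_{[0,K]}$ with $\Delta(\sigma_k)=\delta_k$ for all $k\in\Nset_{[0,K]}$ and such that the pair $(\bm{\sigma}_{[0,K]},\bm{u}_{[0,K]})$ is $\Sigma$-admissible; by \defref{mip:def:admissibledelta} and \defref{mip:def:admissibledeltau} this is precisely what must be shown. First I would record two consequences of $\Sigma$-feasibility. Since each activator is one-hot by \eqref{mip:eq:deltasum}, at every active instant $k_l\in\Ks$ there is a unique mode $q_l\in\Qs$ with $\tilde{\delta}_{k_l}^{q_l}=1$, and because $u_{k_l}\neq0$, condition~\eqref{mip:thm:conditionk} of \thmref{mip:thm:admissibledeltau} forces the only nonzero block of $u_{k_l}$ to be $u_{k_l}^{q_l}$. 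Note also that by \eqref{mip:eq:thmadmissassurance} the constructed $\bm{\delta}_{[0,K]}$ is piecewise constant, equal to $\tilde{\delta}_{k_l}$ (hence designating destination $q_l$) on the entire block ending at $k_l$.

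The key step is a timing claim: for every $l\in\Nset_{[2,L]}$ one has $s_{q_{l-1}q_l}<k_l-k_{l-1}$, i.e., consecutive active instants are spaced strictly farther apart than the setup time of the intervening switch. I would prove this by invoking condition~\eqref{mip:thm:conditionkmintau} of \thmref{mip:thm:admissibledeltau} at $k=k_l$ with $\tau=k_l-k_{l-1}$. If $\tau\leq\overline{s}_{q_l}$, then $\tau$ lies in the admissible range, so there exists $\tilde{q}\in\Qs_{<\tau}^{q_l}$ with $\tilde{\delta}_{k_{l-1}}^{\tilde{q}}=1$; one-hotness at $k_{l-1}$ forces $\tilde{q}=q_{l-1}$, whence $q_{l-1}\in\Qs_{<\tau}^{q_l}$ and thus $s_{q_{l-1}q_l}<\tau$ by \eqref{mip:eq:setupmodeset}. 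If instead $\tau>\overline{s}_{q_l}$, then trivially $s_{q_{l-1}q_l}\leq\overline{s}_{q_l}<\tau$. Either way the claim follows.

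Using the claim I would define $\bm{\sigma}_{[0,K]}$ block by block so that its destination matches $\bm{\delta}_{[0,K]}$ everywhere: on $\Nset_{[0,k_1]}$ set $\sigma_k=q_1$; on each block $\Nset_{[k_{l-1}+1,k_l]}$ with $l\geq2$ launch the switch immediately, i.e., $\sigma_k=(q_{l-1},q_l)$ for $k\in\Nset_{[k_{l-1}+1,\,k_{l-1}+s_{q_{l-1}q_l}]}$ and $\sigma_k=q_l$ for the remaining instants up to $k_l$; and on $\Nset_{[k_L+1,K]}$ set $\sigma_k=q_L$. The timing claim guarantees that the operational part of each block is nonempty and contains its right endpoint, so $\sigma_{k_l}=q_l$ is operational. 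Because each switch is started at once (rather than lingering in the previous mode), the destination is $q_l$ at every instant of the block, giving $\post(\sigma_k)=q_l$ and hence $\Delta(\sigma_k)=\delta_k$ throughout.

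It then remains to verify the three admissibility requirements, which I expect to be routine. Admissibility of $\bm{\sigma}_{[0,K]}$ per \defref{mip:def:admissiblesigma} holds because $\sigma_0=q_1\in\Qs$ yields \eqref{mip:def:admissiblesigma1}, while each switch is started from an operational mode, occupies exactly its $s_{q_{l-1}q_l}$ setup instants, and arrives in $\Post(q_{l-1},q_l)$ (which contains $q_l$), yielding \eqref{mip:def:admissiblesigma2}; the timing claim ensures every switch completes within the horizon. For $(\bm{\sigma}_{[0,K]},\bm{u}_{[0,K]})$ to be $\Sigma$-admissible I would split on $k$: at inactive instants $u_k=0$ satisfies the channel restriction of \defref{mip:def:admissiblesigmau} vacuously, irrespective of $\sigma_k$, and at active instants $k_l$ the state $\sigma_{k_l}=q_l$ is operational while, by the feasibility observation above, the only nonzero channels of $u_{k_l}$ lie in $\Phi(q_l)$. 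The main obstacle is the timing claim—establishing that feasibility via condition~\eqref{mip:thm:conditionkmintau} really forces enough separation between active instants to accommodate the intervening switch—together with the observation that one must launch each switch immediately so that the explicitly constructed $\bm{\sigma}_{[0,K]}$ reproduces exactly the destination-only information carried by $\bm{\delta}_{[0,K]}$.
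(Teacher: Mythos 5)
Your proposal is correct and follows essentially the same route as the paper's proof: extract the separation inequality $s_{q_{l-1}q_l}<k_l-k_{l-1}$ from condition~\eqref{mip:thm:conditionkmintau} of \thmref{mip:thm:admissibledeltau} together with one-hotness, then explicitly construct the actuator sequence that launches each switch immediately after the preceding active instant, and verify \defref{mip:def:admissiblesigma}, \defref{mip:def:admissiblesigmau}, and $\Delta(\sigma_k)=\delta_k$. Your explicit case split on $\tau>\overline{s}_{q_l}$ (where the membership $q_{l-1}\in\Qs_{<\tau}^{q_l}$ is automatic) is slightly more careful than the paper's one-line derivation, and the only detail you leave implicit that the paper spells out is the degenerate case $q_{l-1}=q_l$, which your construction handles anyway since the transition interval is then empty.
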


\begin{proof}
  Note that $\Ks$ denotes the chronologically ordered set of $L$ time instants at which the control inputs are nonzero. Let $q_l$ with $l\in\Nset_{[1,L]}$ denote the mode with nonzero input at time $k_l\in\Ks$, i.e., such that $u_{k_l}^{q_l}\neq0$. By feasibility of $(\bm{\tilde{\delta}}_{[0,K]},\bm{u}_{[0,K]})$, condition~\eqref{mip:thm:conditionkmintau} of \thmref{mip:thm:admissibledeltau} holds for all $k\in\Nset_{[0,K]}$. Combined with \eqref{mip:eq:deltasum}, this implies that $q_l\in\Qs_{<k_{l+1}-k_l}^{q_{l+1}}$ for all $l\in\Nset_{[1,L-1]}$, and therefore
  \begin{equation}\label{mip:eq:thmproofadmissassurancesetuptime}
    s_{q_lq_{l+1}} < k_{l+1}-k_l \quad \text{for all}\ l\in\Nset_{[1,L-1]},
  \end{equation}
  by which we can construct the actuator sequence $\bm{\sigma}_{[0,K]}$ as
  \begin{equation*}
    \sigma_k = \left\{
               \begin{array}{ll}
                 q_1,           & \text{if } k\in\Nset_{[0,k_1]}, \\
                 (q_l,q_{l+1}), & \text{if } k\in\Nset_{[k_l+1,k_l+s_{q_lq_{l+1}}]},\ l\in\Nset_{[1,L-1]}, \\
                 q_l,           & \text{if } k\in\Nset_{[k_{l-1}+s_{q_{l-1}q_l}+1,k_l]},\ l\in\Nset_{[2,L]}, \\
                 q_L,           & \text{if } k\in\Nset_{[k_L,K]}.
               \end{array}
               \right.
  \end{equation*}
  First, note that this sequence $\bm{\sigma}_{[0,K]}$ is admissible according to \defref{mip:def:admissiblesigma}, describing the SAcSS (a) starting and staying in mode $q_1\in\Qs$ during the times $k\in\Nset_{[0,k_1]}$, (b) performing each switch $(q_l,q_{l+1})$, $l\in\Nset_{[1,L-1]}$, during the interval $k\in\Nset_{[k_l+1,k_l+s(q_l,q_{l+1})]}$ in case $q_l\neq q_{l+1}$, or remaining in the same mode during the interval $k\in\Nset_{[k_l,k_{l+1}]}$ if $q_l=q_{l+1}$ (since then $s_{q_lq_{l+1}}=0$ in the above construction of $\bm{\sigma}_{[0,K]}$), and (c) staying in the last active mode $q_L$ during $k\in\Nset_{[k_L,K]}$. Since, additionally, $\sigma_{k_l}=q_l$ for all $l\in\Nset_{[1,L]}$, the pair $(\bm{\sigma}_{[0,K]},\bm{u}_{[0,K]})$ is admissible according to \defref{mip:def:admissiblesigmau}. Finally, note that for $\bm{\delta}_{[0,K]}$ from \eqref{mip:eq:thmadmissassurance} it holds that $\delta_k=\Delta(\sigma_k)$ for all $k\in\Nset_{[0,K]}$, which proves the admissibility of the pair $(\bm{\delta}_{[0,K]},\bm{u}_{[0,K]})$ according to \defref{mip:def:admissibledeltau}.
\end{proof}

We call the procedure in \thmref{mip:thm:admissassurance} the \emph{admissibility assurance}. Note that this procedure is simple, and thus computationally lightweight, and can be performed automatically. The activator sequence constructed using the admissibility assurance corresponds to the actuator sequence that enables $\bm{u}_{[0,K]}$ in an admissible manner, using the minimum number of mode switches, and where each switch occurs as early as possible.

\ifthesis
{\color{black}
\begin{exmp}
  For a SAcSS with $\Gamma$ as in \exmref{mip:exm:sacssautomaton} and initial state $\sigma_0=4$, \figref{mip:fig:admisassurance} shows the destination mode $\post(\sigma_k)$ corresponding to a pair $(\bm{\tilde{\delta}}_{[0,K]},\bm{u}_{[0,K]})$, $K=13$, that is feasible but not admissible (dashed red), referred to as \emph{spurious}, and the destination mode $\post(\sigma_k)$ corresponding to the resulting admissibility-assured pair $(\bm{\delta}_{[0,K]},\bm{u}_{[0,K]})$ obtained via \eqref{mip:eq:thmadmissassurance} (solid black).
  \begin{figure}[t]
    \centering
    \includegraphics[width=0.7\textwidth,clip]{figures/admisassurance.eps}
    \caption{Example of $\post(\sigma_k)$ related to a spurious pair $(\bm{\tilde{\delta}}_{[0,K]},\bm{u}_{[0,K]})$ (dashed red) and the corresponding admissibility-assured pair $(\bm{\delta}_{[0,K]},\bm{u}_{[0,K]})$ (solid black), alongside the mode with nonzero inputs in $\bm{u}_{[0,K]}$ (gray).}
    \label{mip:fig:admisassurance}
  \end{figure}%
  By inspecting the figure in relation to the setup times defined in the previous examples, we find that the admissibility-assured sequence allows the SAcSS to apply the input sequence $\bm{u}_{[0,K]}$ (gray) while switching as little as possible (see all switches and the last three time instants in \figref{mip:fig:admisassurance}) and as soon as possible (see the last required switch at $k\in\Nset_{[7,10]}$).
\end{exmp}

One may note that this is only one particular choice out of a potentially larger set of activator sequences that render $(\bm{\delta}_{[0,K]},\bm{u}_{[0,K]})$ admissible, as discussed in \remref{mip:rem:admissassurancegeneral}.

\begin{rem}\label{mip:rem:admissassurancegeneral}
  Consider the integer programming problem
  \begin{equation}\label{mip:eq:remadmissassurancegeneral}
    \min_{\bm{\delta}_{[0,K]}} \sum_{k=0}^{K-1} d_k(\delta_{k+1}-\delta_{k})^\top(\delta_{k+1}-\delta_{k}),
  \end{equation}
  subject to the constraints \eqref{mip:eq:setuptimeconstraintk} and \eqref{mip:eq:setuptimeconstraintkmintau} (or \eqref{mip:eq:setuptimeconstraintkmintausum} or \eqref{mip:eq:setuptimeconstraintkmintausumuniform}) for all $k\in\Nset_{[0,K]}$ with $\bm{u}_{[0,K]}$ given, where $d_k\in\Rset_{>0}$, $k\in\Nset_{[0,K]}$, represent scalar weights. The activator sequence constructed using the admissibility assurance \eqref{mip:eq:thmadmissassurance} is in fact the optimal solution to \eqref{mip:eq:remadmissassurancegeneral}  in case $d_k<d_{k+1}$ for all $k\in\Nset_{[0,K-1]}$. Indeed, the constraints \eqref{mip:eq:setuptimeconstraintk} and \eqref{mip:eq:setuptimeconstraintkmintau} then ensure feasibility, the minimization of the number of switches by \eqref{mip:eq:remadmissassurancegeneral} ensures admissibility, and the strict increase of the weights $d_k$ (with respect to $k$) favors switching as early as possible. Since, however, $d_k\neq d_l$ for all $k,l\in\Nset_{[1,K]}$, $k\neq l$, is a sufficient condition for the minimizer of \eqref{mip:eq:remadmissassurancegeneral} to be unique, \eqref{mip:eq:remadmissassurancegeneral} is a generalization of the admissibility assurance, and can be used to uniquely construct other minimum-switching activator sequences by which $(\bm{\delta}_{[0,K]},\bm{u}_{[0,K]})$ is admissible, reflecting different switching time preferences (e.g., $d_k>d_{k+1}$ for all $k\in\Nset_{[0,K-1]}$ for switching as late as possible). Note that for such different switching preferences, the procedure \eqref{mip:eq:thmadmissassurance} can also be easily augmented to quickly construct the corresponding minimum-switching activator sequence.
\end{rem}
}%
\fi

\ifthesis
\section{Mixed-integer MPC for SAcSSs}\label{mip:sec:mipmpcsacss}
\else
\section{Mixed-Integer MPC for SAcSSs}\label{mip:sec:mipmpcsacss}
\fi

In this section, we first show how a SAcSS model derived using the procedure discussed in the previous section can be directly integrated into MI-MPC. Then, we present the control algorithm, taking into account that re-optimizing during a switch is redundant, as all switches must be completed by definition.

\subsection{Optimal control problem}

For the purpose of MPC, let us first introduce the sequences $\bm{x}_k=(x_{0|k},\ldots,x_{N|k})$, $\bm{u}_k=(u_{0|k},\ldots,u_{N-1|k})$, and $\bm{\delta}_k=(\delta_{0|k},\ldots,\delta_{N-1|k})$, where $x_{i|k}\in\Xs$, $u_{i|k}\in\Us$, and $\delta_{i|k}\in\Bs$ denote the predicted states, inputs, and activators at $i\in\Nset$ time instants ahead of the prediction's starting time $k\in\Nset$. Then, the SAcSS MI-MPC optimization problem is given by
\begin{subequations}\label{mip:eq:OCP}
\begin{equation}\label{mip:eq:OCPcost}
  \min_{\bm{u}_k,\bm{\delta}_k} V_N(\bm{x}_k,\bm{u}_k)
\end{equation}
subject to the set constraints $\bm{u}_k\in\bm{\Us}_N(\bm{x}_k)$ and $\bm{\delta}_k\in\Bs^N$, the system dynamics \eqref{mip:eq:sacss} by $x_{i+1|k}=f(x_{i|k},u_{i|k})$ for all $i\in\Nset_{<N}$ with initial condition $x_{0|k}=x_k$, and
\begin{equation}\label{mip:eq:OCPdeltasum}
  \begin{bmatrix}[c@{\enskip}|@{\enskip}cc]
    0_{N\times Nn_u} & O_1(k)   & I_N\otimes1_{N_q}^\top
  \end{bmatrix}
  \left[\begin{array}{@{}c@{}} \mathbf{u}_k \\\hline \mathbf{d}_k \end{array}\right]
  = 1_N,
\end{equation}
\begin{equation}\label{mip:eq:OCPsetuptimes}
  \begin{bmatrix}[c@{\enskip}|@{\enskip}ccc]
    -I_{Nn_u}   & O_2(k,\tau) & I_{N-\tau}\otimes\underline{U}_\tau   & O_3(\tau) \\[1mm]
    I_{Nn_u}    & O_2(k,\tau) & -I_{N-\tau}\otimes\overline{U}_\tau   & O_3(\tau)
  \end{bmatrix}
  \left[\begin{array}{@{}c@{}} \mathbf{u}_k \\\hline \mathbf{d}_k \end{array}\right]
  \leq 0_{2Nn_u},
\end{equation}
\end{subequations}
for all $\tau\in\Nset_{[0,\overline{s}]}$. Here, $V_N:\Xs^{N+1}\times\Us^N\to\Rset_{\geq0}$ is the cost function over prediction horizon $N\in\Nset_{>0}$ (typically set to $N\in\Nset_{>\overline{s}}$, see \remref{mip:rem:horizon}), which incorporates the control objective via a penalty on the predicted states $\bm{x}_k$ and inputs $\bm{u}_k$, and can be chosen in different ways (e.g., linear or quadratic) following the MPC literature \cite{Borrelli2017,Bemporad2011,Rawlings2017a}. Next, $\bm{\Us}_N(\bm{x}_k)\subseteq\Us^N$ represents the set of input sequences by which the predicted states $\bm{x}_k$ and inputs $\bm{u}_k$ satisfy their corresponding constraints. Hence, the set membership $\bm{u}_k\in\bm{\Us}_N(\bm{x}_k)$ imposes all general state and input constraints. For the switching constraints specific to our SAcSSs model, \eqref{mip:eq:OCPdeltasum} incorporates \eqref{mip:eq:deltasum}, and \eqref{mip:eq:OCPsetuptimes} corresponds to \eqref{mip:eq:setuptimeconstraintk} and \eqref{mip:eq:setuptimeconstraintkmintau}. Here, $\mathbf{u}_k=[u_{0|k}^\top\ \dots\ u_{N-1|k}^\top]^\top$ denotes the stacked vector containing all elements of $\bm{u}_k$, and
$\mathbf{d}_k=[\delta_{-\min\{k,\overline{s}\}|k}^\top\ \dots\ \delta_{N-1|k}^\top]^\top$, where $\delta_{k-\tau|k}=\delta_{k-\tau}$ for $\tau\in\Nset_{>0}$,
contains the $\min\{k,\overline{s}\}$ activators prior to time $k$ and all activators in $\bm{\delta}_k$. Finally, $\otimes$ denotes the Kronecker product, and for ease of presentation we use the zero matrix shorthand notations $O_1(k)=0_{N\times N_q\min\{k,\overline{s}\}}$, $O_2(k,\tau)=0_{Nn_u\times N_q\min\{k,\overline{s}-\tau\}}$ and $O_3(\tau)=0_{Nn_u\times N_q\tau}$.

\begin{rem}\label{mip:rem:horizon}
  The SAcSS MI-MPC \eqref{mip:eq:OCP} can generally be executed for any $N\in\Nset_{>0}$. However, $N\in\Nset_{>\overline{s}}$ may often be desired, since otherwise the MPC is unable to take into account the predicted effect on the state resulting from initiating any switch $(q,\tilde{q})\in\Es$ with $s_{q\tilde{q}}\geq N$ and then applying nonzero inputs in the destination mode $\tilde{q}$.
\end{rem}

Thus, due to the structure in \eqref{mip:eq:OCPdeltasum}-\eqref{mip:eq:OCPsetuptimes}, the constraints \eqref{mip:eq:deltasum}, \eqref{mip:eq:setuptimeconstraintk}, and \eqref{mip:eq:setuptimeconstraintkmintau} that describe the mode switching and setup times can directly and automatically be used to formulate MI-MPCs for SAcSSs, which can be solved using off-the-shelf numerical optimizers. Moreover, \eqref{mip:eq:OCPdeltasum}-\eqref{mip:eq:OCPsetuptimes} are linear in terms of the activators and inputs. Consequently, in case $V_N$ is quadratic and $\bm{u}_k\in\bm{\Us}_N(\bm{x}_k)$ translates into linear constraints, \eqref{mip:eq:OCP} is an MIQP, for which efficient solvers are available (e.g., Gurobi, CPLEX, MOSEK).

Next, note that by \defref{mip:def:admissibledeltau} and \defref{mip:def:feasibledeltau} the admissibility of a pair $(\bm{\delta}_k,\bm{u}_k)$ implies its feasibility. The converse does not hold, but given a feasible pair $(\bm{\tilde{\delta}}_k,\bm{u}_k)$ the existence of an admissible pair with the same input sequence is guaranteed (and it can be constructed using \thmref{mip:thm:admissassurance}), from which we conclude the set of admissible pairs to be contained in the set of feasible pairs. Consequently, since in solving \eqref{mip:eq:OCP} to find an optimal pair $(\bm{\tilde{\delta}}_k^*,\bm{u}_k^*)$ we minimize over the set of feasible solutions, the optimal objective function value $V_N^*=V_N(\bm{x}_k,\bm{u}_k^*)$ is a lower bound for the minimization performed over the admissible solutions only. However, now note that in $V_N(\bm{x}_k,\bm{u}_k)$ \eqref{mip:eq:OCPcost} we assume no explicit cost on the predicted activator sequence $\bm{\delta}_k$, which is valid for many applications where the control-related performance measure is mainly based on the system states and the required control effort. As a result, applying the admissibility assurance to a feasible pair $(\bm{\tilde{\delta}}_k^*,\bm{u}_k^*)$ that optimizes \eqref{mip:eq:OCP} yields an admissible pair $(\bm{\delta}_k^*,\bm{u}_k^*)$
\ifthesis
{\color{black}
(corresponding to the preferred minimum-switching sequence, see \remref{mip:rem:admissassurancegeneral})
}%
\fi
that also optimizes \eqref{mip:eq:OCP} with the same objective function value, and thus truly minimizes the optimal control problem evaluated over the set of admissible solutions.

\ifthesis
{\color{black}
\begin{rem}
  Instead of optimizing \eqref{mip:eq:OCP} and subsequently applying the admissibility assurance \eqref{mip:eq:thmadmissassurance}, one could consider adding a switching penalty in the form of \eqref{mip:eq:remadmissassurancegeneral} to $V_N$. The optimal activator sequence $\bm{\delta}_k^*$ would then always correspond to a minimum-switching actuator sequence, and as a result the optimal pair $(\bm{\delta}_k^*,\bm{u}_k^*)$ would be admissible. However, for control applications where there is no inherent desire to reduce the number of mode switches, or where less switching is generally preferred, but not at the expense of increasing the state- and input-related costs, introducing such a bias in the cost function could result in suboptimal behavior with respect to the true control objective. For such applications, our approach is considered favorable, as it allows for optimizing the true (unbiased) control objective in $V_N$, while additionally using as few mode switches as possible to achieve it. Additionally, when suboptimally solving the MIP problem \eqref{mip:eq:OCP}, as is typically done in practice to reduce computation time (e.g., by terminating the optimization in branch-and-bound \cite{Lawler1966,Borrelli2017} or branch-and-cut \cite{Rinaldi1991} algorithms when a solution is found of which the cost is closer to the best theoretical cost than some threshold \cite{Appa2006}), the obtained near-optimal activator sequence is not guaranteed to be admissible even when including a switching penalty of the form \eqref{mip:eq:remadmissassurancegeneral} in the cost function. Then, \eqref{mip:eq:thmadmissassurance} again offers a way to quickly and easily construct an activator sequence by which $(\bm{\delta}_k^*,\bm{u}_k^*)$ is guaranteed to be admissible.
\end{rem}
}%
\fi

\subsection{MI-MPC algorithm}

Recall that if a mode switch is initiated, for admissibility it must also be completed (see \eqref{mip:def:admissiblesigma2} of \defref{mip:def:admissiblesigma}), and that during a switch all control inputs are zero (see \defref{mip:def:admissiblesigmau}). Therefore, in the context of MPC there is no need for re-optimizing the control inputs and activators while the SAcSS is in the process of switching. When incorporating this practical consideration, the proposed MI-MPC setup is given by \algoref{mip:alg:MIMPC}. The variable $s_t$ stores the setup time of the last/current switch, and the counter $k_t$ represents the number of elapsed time steps since starting this switch. Initially, we set $k=k_t=s_t=0$, perform an admissibility-assured MI-MPC optimization by consecutively solving \eqref{mip:eq:OCP} and \eqref{mip:eq:thmadmissassurance}, and apply the optimal pair $(\delta_{0|k}^*,u_{0|k}^*)$ to the system. If at some time $k-1$ the SAcSS was in mode $q\in\Qs$, and at time $k$ it is optimal to start transitioning towards another mode $\tilde{q}\in\Qs$, $q\neq\tilde{q}$, indicated by $\delta_k\neq\delta_{k-1}$, $s_t$ stores the setup time $s_{q\tilde{q}}$, and $k_t$ is reset to zero. At the following time steps, we iteratively increase $k_t$ by one, using which we can evaluate whether the setup time required for the current transition has elapsed. If so, i.e., $k_t\geq s_t$, we perform another admissibility-assured MI-MPC optimization by consecutively solving \eqref{mip:eq:OCP} and \eqref{mip:eq:thmadmissassurance}. If not, i.e., $k_t<s_t$, we simply continue the mode switch ($\delta_k=\delta_{k-k_t}$) while disabling all inputs ($u_k=0_{n_u}$) until this transition has completed.
\begin{algorithm}[t]
  \caption{MI-MPC for for SAcSSs}\label{mip:alg:MIMPC}
  \begin{algorithmic}
    \Require SAcSS-based MI-MPC \eqref{mip:eq:OCP}, and initial conditions including $k=k_t=s_t=0$
    \While{SAcSS feedback control}
      \If{$k_t\geq s_t$}
        \State Solve \eqref{mip:eq:OCP} and \eqref{mip:eq:thmadmissassurance} to find  $(\bm{\delta}_k^*,\bm{u}_k^*)$
        \State Apply $(\delta_k,u_k) \gets (\delta_{0|k}^*,u_{0|k}^*)$
        \If{$\delta_k\neq\delta_{k-1}$}
          \State $k_t \gets 0$
          \State \parbox[t]{.8\linewidth}{$s_t \gets s(q,\tilde{q})$, where $q,\tilde{q}\in\Qs$ correspond to\\$\delta_{k-1}^q=1$ and $\delta_k^{\tilde{q}}=1$}
        \EndIf
      \Else
        \State Apply $(\delta_k,u_k) \gets (\delta_{k-k_t},0_{n_u})$
      \EndIf
      \State $k \gets k+1$
      \State $k_t \gets k_t+1$
    \EndWhile
  \end{algorithmic}
\end{algorithm}%

\ifthesis
\section{Case study}\label{mip:sec:mrhifu}
\else
\section{Case Study}\label{mip:sec:mrhifu}
\fi

This section first discusses the MR-HIFU hyperthermia setup and treatment. Then, we apply the procedure proposed in \secref{mip:sec:modeling} to derive a compact SAcSS model, set up the MI-MPC, and perform validating simulations.

\subsection{MR-HIFU hyperthermia therapy platform}

The MR-HIFU system considered in this case study consists of a Profound Sonalleve HIFU platform,
\ifthesis
{\color{black}
depicted in \figref{mip:fig:philipssonalleve},
}%
\fi
and a Philips 3T Achieva MRI scanner. The former is a dedicated trolley-tabletop in which an MR-compatible HIFU transducer and its carrier system are integrated. The latter noninvasively provides near-real-time temperature measurements. This setup is already being used in clinics for the treatment of uterine fibroids, desmoid tumors, osteoid osteomas, and painful bone metastases.

\ifthesis
\begin{figure}[t]
  \centering
  \includegraphics[width=0.8\textwidth,trim={0 0 0 4cm},clip]{figures/philipssonalleve.eps}
  \caption{\color{black}Philips MRI scanner and Profound Sonalleve MR-HIFU therapy platform.}
  \label{mip:fig:philipssonalleve}
\end{figure}%
\fi

\subsubsection{MR thermometry}\label{mip:subsubsec:MRT}

The temperature maps are obtained noninvasively using the proton resonance frequency shift (PRFS) method \cite{Ishihara1995,Winter2016}, providing the temperature difference with respect to some baseline by comparing the current MR image to a reference image. The reference image is typically acquired before treatment, such that the baseline corresponds to zero treatment-induced temperature elevation.
Due to the relative nature of this MR-based thermometry, combined with the fact that the HIFU transducer itself distorts the magnetic field, for accurate measurements given a certain transducer position, a baseline image is required that was obtained with the transducer in the same position. Therefore, we constrain the possible mechanical transducer positions to a discrete set, such that we can obtain a reference image for each location before treatment, and use a lookup table to select the correct reference image during treatment.

\subsubsection{HIFU transducer}

The Sonalleve contains a phased-array transducer consisting of 256 acoustic elements, each of which is able to generate high-intensity ultrasound waves. By coordinated modulation of the individual elements' phases and amplitudes, referred to as electronic beam steering, a focal spot can be created, see \figref{mip:fig:mrhifupatient}, and steered through a circular area with a diameter of 16~mm, which we call a treatment cell.
\begin{figure}[t]
  \centering
  %\includegraphics[width=0.48\textwidth,trim={1.8cm 0 0.5cm 0},clip]{figures/patienttable_small.eps}
  %\ifthesis
  \includegraphics[width=0.8\textwidth]{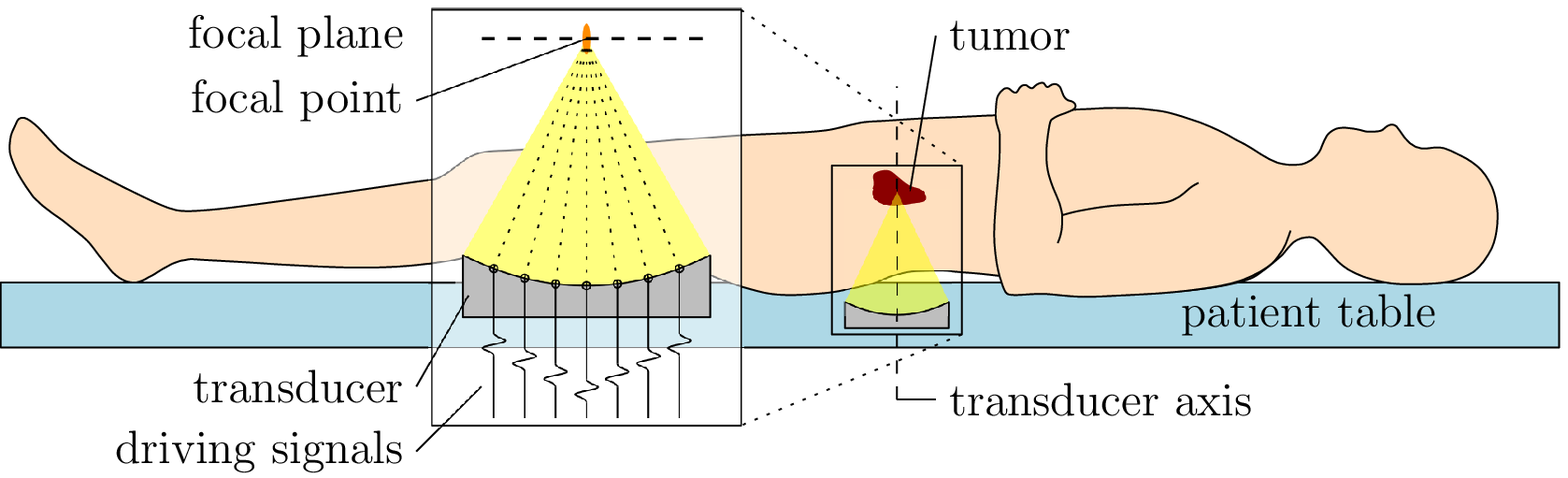}
  %\else
%  \includegraphics[width=0.45\textwidth]{figures/patienttable_small_full.eps}
%  \fi
  \caption{Schematic of a focal point by electronic beam steering, with the focal plane in the tumor volume.}
  \label{mip:fig:mrhifupatient}
\end{figure}%
To treat larger regions, the transducer itself must be mechanically relocated, see \cite{Tillander2016}, enabling the heating of multiple treatment cells throughout the treatment area.
Due to the previously mentioned limitations of the MR thermometry, the treatment cell locations are limited to a predefined discrete set. In this case study, we allocate $N_q=4$ treatment cells as depicted in \figref{mip:fig:corner4lumpin}, which we will explain further below.
\begin{figure}[t]
  \centering
  \psfrag{cell1}[cc][cc][0.9][0]{$q=1$}
  \psfrag{cell2}[cc][cc][0.9][0]{$q=2$}
  \psfrag{cell3}[cc][cc][0.9][0]{$q=3$}
  \psfrag{cell4}[cc][cc][0.9][0]{$q=4$}
  \psfrag{cellr}[cc][cc][0.9][0]{$\Rs$}
  \psfrag{cells}[cc][cc][0.9][0]{$\Ss$}
  \psfrag{x}[cc][cc][0.9][0]{$r_x$ [m]}
  \psfrag{y}[cc][cc][0.9][0]{$r_y$ [m]}
  \psfrag{-0.02}[cc][cc][0.9][0]{-0.02}
  \psfrag{-0.01}[cc][cc][0.9][0]{ -0.01}
  \psfrag{0}[cc][cc][0.9][0]{0}
  \psfrag{0.01}[cc][cc][0.9][0]{ 0.01}
  \psfrag{0.02}[cc][cc][0.9][0]{0.02}
  \ifthesis
  \includegraphics[scale=0.9,clip]{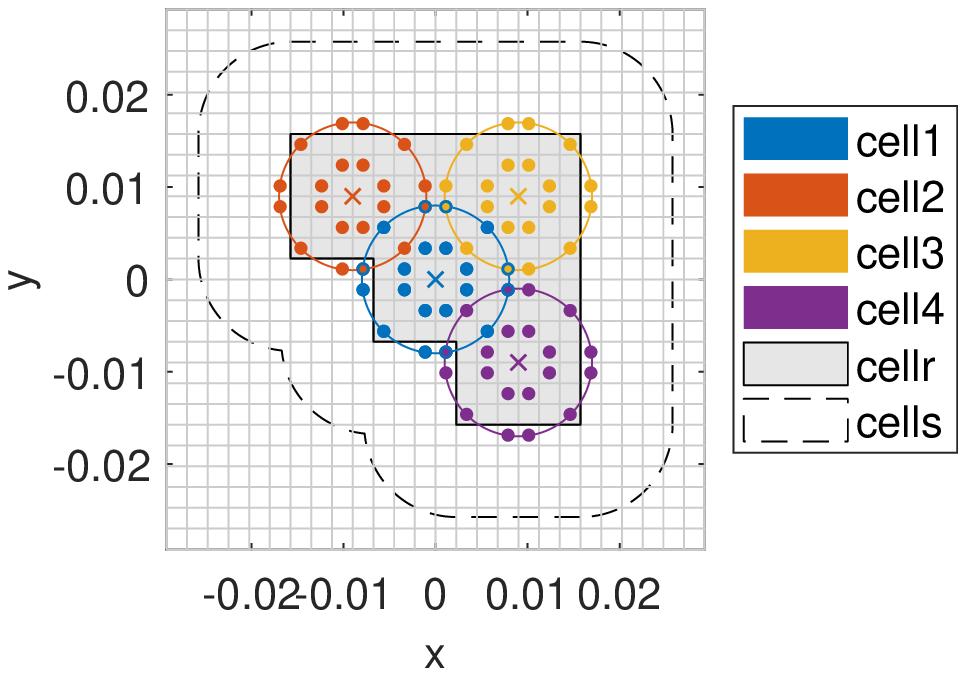}
  \else
  \includegraphics[scale=0.80,clip]{figures/corner4lumpin.eps}
  \fi
  \caption{The centers ($\times$), electronic beam steering ranges (circles), and sonication points ($\bullet$) of the four treatment cells, over a large ROI $\Rs$. The tissue outside $\Ss$ must be safeguarded against overheating.}
  \label{mip:fig:corner4lumpin}
\end{figure}%

\subsection{Hyperthermia treatment}

The main goal in a hyperthermia treatment is to create and maintain a controlled and homogeneous temperature elevation in the region of interest (ROI) $\Rs$ shown in \figref{mip:fig:corner4lumpin} containing the tumor, as this sensitizes the treated tissue to the effects of chemo- and radiotherapy. The tissue sensitization gradually increases for rising temperatures starting around 40~$^\circ$C \cite{Issels2008}, which is why the healthy tissue outside $\Ss$ in \figref{mip:fig:corner4lumpin} will be safeguarded against temperatures above 40~$^\circ$C. In the ROI, adequate sensitization is considered to occur around 41~$^\circ$C, and optimal treatment quality is achieved at 42~$^\circ$C, while overheating above 43~$^\circ$C reduces the beneficial heat-induced effects and should therefore be avoided.

\subsection{SAcSS model}

\subsubsection{Arc-weighted digraph}

The complete simple arc-weighted digraph $\Gamma=(\Qs,\Es,s)$ describing the system considered in this case study corresponds to \exmref{mip:exm:sacssautomaton}. Here, the setup time defined by $s$ represents the number of samples required to mechanically relocate the transducer from one cell to another.

\subsubsection{Plant dynamics, state space, input space, and actuator selector function}\label{mip:subsubsec:sacss}

To model the tissue's thermal dynamics, we follow a procedure similar to
\ifthesis
\chapref{chap:offsetfree}.
\else
\cite{Deenen2020a,Deenen2020}.
\fi
In summary, this entails first describing the tissue's thermal dynamics using the Pennes bioheat equation \cite{Pennes1948} using tissue parameters from \cite{ITISFoundation2018} and then spatially discretizing this partial differential equation on a two-dimensional square grid with $2.25\times2.25$~mm$^2$ voxels (depicted in \figref{mip:fig:corner4lumpin} by the grid lines) using the central difference scheme, which in \cite{Sebeke2019} was verified to be the method that best balances model simplicity with descriptive accuracy. Finally, the model is temporally discretized with the MR thermometry sample time $T_s=3.2$~s using the forward Euler method, which preserves model sparsity, while providing sufficient accuracy for the considered system%
\ifthesis
, as discussed in \chapref{chap:offsetfree}.
\else
\cite{Deenen2020}.
\fi

The resulting discrete-time state-space model in the form of \eqref{mip:eq:sacss} is given by
\begin{subequations}\label{mip:eq:model}
\begin{align}
  x_{k+1} &= f(x_k,u_k) = Ax_{k} + Bu_{k}, \label{mip:eq:modelx} \\
  y_k &= \left\{\begin{array}{ll} x_k + v_k, & \text{if}\ \sigma_{k-1}\in\Qs, \\ \emptyset, & \text{if}\ \sigma_{k-1}\in\Es, \end{array}\right. \label{mip:eq:modely}
\end{align}
%}
\end{subequations}
where the states $x_k\in\Xs=\Rset^{n_x}$, $n_x=36^2=1296$, represent the temperature elevations with respect to the baseline of the voxels in the focal plane at real time $t_k=kT_s$. The matrix $A$ captures the effects of heat transfer by conduction and by blood perfusion in the form of first-order thermal dynamics and heat dissipation, respectively. As a result, $A$ is a Schur matrix, i.e., with all eigenvalues strictly inside the unit disc of the complex plane. The voxels are chosen such that their centers coincide with the points measured by MR thermometry, resulting in $y_k\in\Rset^{n_x}$ to consist of full state measurements corrupted by MR measurement noise $v_k\in\Rset^{n_x}$, which can be well approximated by spatially uncorrelated zero-mean Gaussian noise with a standard deviation of $0.4$~$^\circ$C, i.e., $v_k\sim\Ns(0,0.4^2I_{n_x})$, when the transducer was not moving during the interval from $t_{k-1}$ until $t_k$ indicated by $\sigma_{k-1}\in\Qs$. In case of transducer motion $\sigma_{k-1}\in\Es$, no measurement is available at time $t_k$. Per treatment cell, and thereby per actuator mode $q\in\Qs$, we use $n_u^q=n_{u,\mathrm{cell}}=20$ voxels at the centers of which we allow sonication by appropriate steering of the focal spot. These locations are referred to as sonication points, which are also shown in \figref{mip:fig:corner4lumpin} by the markers `$\bullet$'. Using our method, we only need to model the $N_q$ \emph{operational} input modes. To this end, let the input $u_k^q$ describe the average acoustic power applied at each of the sonication points in cell $q\in\Qs$ over the course of one sampling interval. The inputs are collected in the input vector $u_k\in\Us$ as in \eqref{mip:eq:sacssu}, and hence the input matrix is given by
\begin{equation}\label{mip:eq:modelB}
  B = \begin{bmatrix} B^1 & \cdots & B^{N_q} \end{bmatrix}\in\Rset^{n_x\times n_u},
\end{equation}
where each submatrix $B^q\in\Rset^{n_x\times n_{u,\mathrm{cell}}}$ describes the system's temperature change in response to the heating power $u_k^q$ applied at the sonication points in cell $q\in\Qs$. Since by HIFU we can only deposit heat in the system, but not extract it, the inputs are nonnegative. Additionally, for patient safety, the maximum input power per sonication point is limited to $u_{\max}=15$~W. Consequently, the input space is given by $\Us=\Rset_{[0,u_{\max}]}^{n_u}$, $n_u=\sum_{q\in\Qs}n_u^q=N_qn_{u,\mathrm{cell}}=80$. Next, note that even though six sonication points on the outer ring of cell 1 are shared by multiple cells, to satisfy \assref{mip:ass:modallydisjointinputs} we have defined the $n_{u,\mathrm{cell}}$ sonication points for each cell individually, resulting in disjoint inputs per actuator mode. Correspondingly,
\ifthesis
{\color{black}
in accordance with \remref{mip:rem:modallydisjointinputs},
}%
\fi
the inputs are ordered as in \eqref{mip:eq:sacssu}, and the actuator selector function is given by
\begin{equation}\label{mip:eq:modelPhi}
  \Phi(\sigma) = \left\{\begin{array}{ll} \Nset_{[1+n_{u,\mathrm{cell}}(q-1),n_{u,\mathrm{cell}}q]}, & \text{if }\sigma=q\in\Qs, \\ \emptyset, & \text{if }\sigma\in\Es, \end{array}\right.
\end{equation}
describing the fact that heating can only occur in the cell where the transducer is located, and that no heating may occur during transducer motion.

\subsection{Input constraints}

In this case study, we use Gurobi 8.1.1, which requires explicit specification of the input's upper and lower limits. From $\Us$ as in \secref{mip:subsubsec:sacss}, according to \eqref{mip:eq:inputboundsall} we find
\begin{equation}\label{mip:eq:modelinputconstraintsu}
  0_{n_u} = \underline{u} \leq u_k \leq \overline{u} = 1_{n_u}u_{\max}.
\end{equation}
In addition, as a safety measure in the considered MR-HIFU hyperthermia setup we constrain the total applied power in a single treatment cell to 100~W, i.e.,
\begin{equation}\label{mip:eq:modelinputconstraintsusum}
  1_{n_u^q}^\top u^q \leq \overline{u_\Sigma} = 100 \quad \text{for all}\ q\in\Qs.
\end{equation}
To formulate the inequality constraints that incorporate the switching and setup times in the upcoming MI-MPC, of which the general form is given by \eqref{mip:eq:setuptimeconstraintk} and \eqref{mip:eq:setuptimeconstraintkmintau}, based on $\Gamma$ we first construct the setup time matrix \eqref{mip:eq:setuptimematrix}
\ifthesis
{\color{black}
given by
\begin{equation}
  S = \begin{bmatrix} 0 & 2 & 1 & 2 \\ 2 & 0 & 2 & 3 \\ 1 & 2 & 0 & 2 \\ 2 & 3 & 2 & 0 \end{bmatrix},
\end{equation}
and the corresponding setup time constraint matrices \eqref{mip:eq:Stau}, which read
\begin{equation}\label{mip:eq:modelStau}
  S_1 = I_4,\quad
  S_2 = \begin{bmatrix} 1 & 0 & 1 & 0 \\ 0 & 1 & 0 & 0 \\ 1 & 0 & 1 & 0 \\ 0 & 0 & 0 & 1 \end{bmatrix},\quad
  S_3 = \begin{bmatrix} 1 & 1 & 1 & 1 \\ 1 & 1 & 1 & 0 \\ 1 & 1 & 1 & 1 \\ 1 & 0 & 1 & 1 \end{bmatrix}.
\end{equation}
}%
\else
and the corresponding constraint matrices \eqref{mip:eq:Stau}, which read
\begin{equation}\label{mip:eq:modelStau}
  S = \begin{bmatrix} 0 & 2 & 1 & 2 \\ 2 & 0 & 2 & 3 \\ 1 & 2 & 0 & 2 \\ 2 & 3 & 2 & 0 \end{bmatrix},\
  \left\{
  \begin{array}{c}
  S_1 = I_4,\\
  S_2 = \begin{bmatrix} 1 & 0 & 1 & 0 \\ 0 & 1 & 0 & 0 \\ 1 & 0 & 1 & 0 \\ 0 & 0 & 0 & 1 \end{bmatrix},\
  S_3 = \begin{bmatrix} 1 & 1 & 1 & 1 \\ 1 & 1 & 1 & 0 \\ 1 & 1 & 1 & 1 \\ 1 & 0 & 1 & 1 \end{bmatrix}.
  \end{array}
  \right.
\end{equation}
\fi
However, since by $\Us\subset\Rset_{\geq0}$ the controls are nonnegative, see also \eqref{mip:eq:modelinputconstraintsu}, and a common upper bound in the sum of the inputs is imposed by \eqref{mip:eq:modelinputconstraintsusum}, we do not need to formulate the constraints \eqref{mip:eq:setuptimeconstraintkmintau}, but we can instead use the simplified form \eqref{mip:eq:setuptimeconstraintkmintausumuniform} discussed in \remref{mip:rem:setuptimeconstraintkmintausumuniform}. Moreover, by imposing \eqref{mip:eq:modelinputconstraintsu} we can also use the simplified form \eqref{mip:eq:remsimplificationuminmax} instead of \eqref{mip:eq:setuptimeconstraintk}, see \remref{mip:rem:simplificationuminmax}. Thus, introducing the Boolean one-hot activators $\delta_k\in\Bs$ as in \eqref{mip:eq:delta} for $N_q=4$, we find the linear inequality constraints on the inputs $u_k$ related to the setup times to be given by
\begin{subequations}\label{mip:eq:modelsetuptimeconstraints}
\begin{align}
  J_{u}u_k &\leq \overline{u_\Sigma}\delta_{k}, \label{mip:eq:modelsetuptimeconstraintsk} \\
  J_{u}u_k &\leq \overline{u_\Sigma}S_\tau^\top\delta_{k-\tau}\quad \text{for all}\ \tau\in\Nset_{[1,\min\{k,3\}]}.
  %,\quad \tau=1,2,3. \label{mip:eq:modelsetuptimeconstraintskmintau}
\end{align}
\end{subequations}

\subsection{MI-MPC for large-volume MR-HIFU hyperthermia}\label{mip:subsec:mpc}

\subsubsection{Prediction model}

Based on \eqref{mip:eq:model}, we define the prediction model as
\begin{equation}\label{mip:eq:mpcmodel}
  x_{i+1|k} = Ax_{i|k} + Bu_{i|k},
\end{equation}
where $x_{i|k}\in\Xs$ and $u_{i|k}\in\Us$ denote the predicted states and inputs, respectively, at $i\in\Nset$ time steps ahead of the prediction sequence's starting time $k\in\Nset$.

\subsubsection{Observer model}

To reduce the propagation of the measurement noise into the desired input $u_k$ computed by MPC, we use a Luenberger observer given by
  \begin{equation}\label{mip:eq:observer}
    \hat{x}_{k} = \left\{\begin{array}{ll} A\hat{x}_{k-1} + Bu_{k-1} + L(y_{k}-\hat{y}^-_{k}), & \text{if}\ \sigma_{k-1}\in\Qs, \\ A\hat{x}_{k-1} + Bu_{k-1}, & \text{if}\ \sigma_{k-1}\in\Es, \end{array}\right.
  \end{equation}
where $\hat{y}^-_{k} = A\hat{x}_{k-1} + Bu_{k-1}$ denotes the estimated output at time~$k$ before the measurement-based correction step in \eqref{mip:eq:observer} if possible by $\sigma_{k-1}\in\Qs$. The resulting estimation error dynamics are then given by
\begin{equation}\label{mip:eq:estimationerror}
  e_k = x_k-\hat{x}_k = \left\{\begin{array}{ll} (A-LA)e_{k-1} - Lv_k, & \text{if}\ \sigma_{k-1}\in\Qs, \\ Ae_{k-1}, & \text{if}\ \sigma_{k-1}\in\Es. \end{array}\right.
\end{equation}
The observer gain matrix is chosen as $L = 0.25 I_{n_x}$, which has been tuned such that the resulting estimation error dynamics \eqref{mip:eq:estimationerror} are stable, as can be shown using \thmref{mip:thm:observerstability} below, and exhibit desirable convergence behavior.

\begin{thm}\label{mip:thm:observerstability}
  The estimator error dynamics \eqref{mip:eq:estimationerror}, resulting from the system \eqref{mip:eq:model} and observer \eqref{mip:eq:observer}, are exponentially stable for $L=\alpha_lI$ with $0\leq\alpha_l\leq2$ and $v_k=0$, $k\in\Nset$.
\end{thm}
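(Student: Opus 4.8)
The plan is to observe that, with $v_k=0$, the error recursion \eqref{mip:eq:estimationerror} for the choice $L=\alpha_lI$ collapses to a \emph{switched linear system} whose two modes are scalar multiples of the \emph{same} matrix $A$. Substituting $L=\alpha_lI$ gives $e_k=(1-\alpha_l)Ae_{k-1}$ when $\sigma_{k-1}\in\Qs$ and $e_k=Ae_{k-1}$ when $\sigma_{k-1}\in\Es$. Writing both cases uniformly as $e_k=c_kAe_{k-1}$ with $c_k\in\{1-\alpha_l,1\}$, the hypothesis $0\leq\alpha_l\leq2$ ensures $1-\alpha_l\in[-1,1]$ and hence $|c_k|\leq1$ for every $k$. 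This is the key structural remark: both modes commute (being scalar multiples of one matrix), so a single quadratic Lyapunov function will serve uniformly across all switching sequences, sidestepping the usual difficulty of common Lyapunov functions for switched systems.

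First I would invoke the fact, established earlier in the excerpt, that $A$ is Schur. By the discrete-time Lyapunov theorem there then exists $P=P^\top\succ0$ solving $A^\top PA-P\prec0$, and consequently a constant $\rho\in(0,1)$ with $A^\top PA\preceq\rho P$; concretely one may take $\rho=\lambda_{\max}(P^{-1/2}A^\top PAP^{-1/2})$, which is symmetric with largest eigenvalue strictly below $1$ precisely because $A^\top PA\prec P$.

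Next I would propose the common Lyapunov candidate $V(e)=e^\top Pe$ and evaluate it along the error dynamics. Using $e_k=c_kAe_{k-1}$,
\[
  V(e_k)=c_k^2\,e_{k-1}^\top A^\top PAe_{k-1}\leq c_k^2\,\rho\,V(e_{k-1})\leq\rho\,V(e_{k-1}),
\]
where the final inequality uses $c_k^2\leq1$. Since this bound holds at each $k$ \emph{independently} of the switching signal $\sigma$, iterating yields $V(e_k)\leq\rho^kV(e_0)$. Combining this with the norm equivalence $\lambda_{\min}(P)\|e\|^2\leq V(e)\leq\lambda_{\max}(P)\|e\|^2$ converts the decay into the exponential bound
\[
  \|e_k\|\leq\sqrt{\lambda_{\max}(P)/\lambda_{\min}(P)}\;\rho^{k/2}\,\|e_0\|,
\]
which establishes exponential stability.

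The only subtlety I would flag is that this is a switched system, so decay must be guaranteed uniformly over \emph{all} admissible switching sequences rather than for a single fixed mode. The argument above handles this cleanly: because every mode is $c_kA$ with $|c_k|\leq1$, the same $P$ obtained from the Lyapunov equation for $A$ applies verbatim, and the scalar factor $c_k^2\leq1$ can only accelerate the decay. In particular, no dwell-time or switching-sequence restriction is required, and the endpoint cases $\alpha_l\in\{0,2\}$ (where $|1-\alpha_l|=1$) are covered since the contraction is then driven entirely by the Schur factor $\rho<1$.
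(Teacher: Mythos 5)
Your proposal is correct and follows essentially the same route as the paper: both exploit that $A-LA=(1-\alpha_l)A$ is a scalar multiple of the Schur matrix $A$ with $(1-\alpha_l)^2\leq 1$ for $0\leq\alpha_l\leq 2$, so the Lyapunov matrix $P$ certifying $A^\top PA-P\prec 0$ serves as a common quadratic Lyapunov function for both modes under arbitrary switching. Your version merely makes the uniform decay rate $\rho$ and the resulting norm bound explicit, which the paper leaves implicit.
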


\begin{proof}
   Recall that $A$ in \eqref{mip:eq:model} is Schur due to the tissue's stable (first-order) thermal dynamics. In the absence of noise,
  Since $A$ is Schur, there exists a positive definite matrix $P>0$ such that
  \begin{equation}
    A^\top PA - P < 0,
  \end{equation}
  which ensures global exponential stability of the $\sigma_{k-1}\in\Es$ subsystem of \eqref{mip:eq:estimationerror} when using $V(e_k)=e_k^\top Pe_k$ as Lyapunov function. Next, for $L=\alpha_l I$ with $0\leq\alpha_l\leq2$, it holds that
  \begin{equation}
    (A-LA)^\top P(A-LA) - P = (1-\alpha_l)^2A^\top PA - P \leq A^\top PA - P < 0,
  \end{equation}
  which guarantees global exponential stability of the $\sigma_{k-1}\in\Qs$ subsystem of \eqref{mip:eq:estimationerror} using the same Lyapunov function $V(e_k)=e_k^\top Pe_k$. As $V(e_k)$ is a \emph{common} Lyapunov function for the two subsystems, \eqref{mip:eq:estimationerror} is globally exponentially stable for arbitrary switching sequences of $\sigma_k$. Finally, note that this Lyapunov function can also be used to establish mean-square stability in case of stochastic (Gaussian) noise.
\end{proof}

\subsubsection{MI-MPC optimization problem}

The temperature objectives of our MI-MPC setup are depicted schematically in \figref{mip:fig:Tmap} in cross-section perspective.
\begin{figure}[t]
  \centering
  \psfrag{tagmax}[cr][cr][1][0]{$\overline{T}+\epsilon$}
  \psfrag{tagmin}[cr][cr][1][0]{$\underline{T}-\underline{\epsilon}$}
  %\ifthesis
  \includegraphics[width=0.7\textwidth,clip]{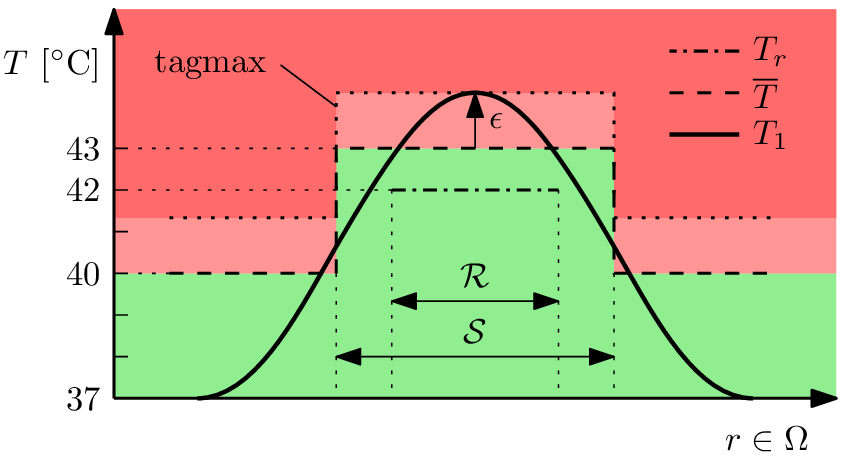}
  %\else
%  \includegraphics[width=0.42\textwidth,clip]{figures/Tmap_notrans_nofloor.eps}
%  \fi
  \caption{Schematic cross section of the temperature objectives corresponding to $\Rs$ and $\Ss$. The maximum violation $\epsilon$ is shown for some overheated temperature distribution $T_1$ such that $T_1\leq\overline{T}+\epsilon$.}
  \label{mip:fig:Tmap}
\end{figure}%
On $\Omega\subset\Rset^2$, denoting the patient domain in the focal plane, $T_r:\Rs\to\Rset$ (dash-dotted) is the reference temperature of 42 $^\circ$C in the ROI $\Rs$, and $\overline{T}:\Omega\to\Rset$ is the location-dependent upper temperature bound (dashed) used to prevent overheating (red). To translate these objectives to the state-space representation \eqref{mip:eq:model}, we introduce the performance variables $z_k=Hx_k\in\Rset^{n_z}$, with $H\in\{0,1\}^{n_z\times n_x}$ being a matrix with one $1$ per row (and at most one 1 per column), which are the temperatures of the $n_z<n_x$ voxels inside $\Rs$. Furthermore, we use $z_r\in\Rset^{n_z}$ and $\overline{x}\in\Rset^{n_x}$ to denote the voxel-wise temperature reference and upper bounds corresponding to the values of $T_r$ and $\overline{T}$, respectively. The maximum violation of the upper temperature bound is measured by the slack variable $\epsilon_k = \epsilon(x_k) = \lVert\max\{x_k-\overline{x},0_{n_x}\}\rVert_\infty\in\Rset_{\geq0}$, where the maximum operator is used element-wise. The predicted performance and slack variables are denoted by $z_{i|k}$ and $\epsilon_{i|k}$, respectively.

The resulting MI-MPC optimization problem, which can be easily written in the form of \eqref{mip:eq:OCP}, is then given by
\begin{subequations}\label{mip:eq:opt}
\begin{equation}\label{mip:eq:optcostQP}
  \min_{\bm{\delta}_k,\bm{u}_k} \sum_{i=0}^{N} (z_{i|k}-z_r)^\top Q (z_{i|k}-z_r) + f_{\epsilon}\epsilon_{i|k},
\end{equation}
subject to
\begin{align}
  x_{i+1|k} &= Ax_{i|k} + Bu_{i|k}, && \forall\ i\in\Nset_{[0,N-1]}, \label{mip:eq:optconstraintsdynamics} \\
  x_{0|k} &= \hat{x}_{k}, && \label{mip:eq:optconstraintsx0}  \\
  1_{N_q}^\top \delta_{i|k} &= 1, && \forall\ i\in\Nset_{[0,N-1]}, \label{mip:eq:optconstraintsdeltasum}  \\
  %d_{0|k} &= \hat{d}_{k}, &  & \label{mip:eq:optconstraintsd0}  \\
  %\underline{x} - 1_{n_x}\underline{\epsilon}_{i|k} & \leq x_{i|k} \leq \overline{x} + 1_{n_x}\epsilon_{i|k}, && \forall\ i\in\Nset_{[0,N]}, \label{mip:eq:optconstraintsTminmax} \\
  x_{i|k} & \leq \overline{x} + 1_{n_x}\epsilon_{i|k}, && \forall\ i\in\Nset_{[0,N]}, \label{mip:eq:optconstraintsTminmax} \\
  %d_{i+1|k} &= d_{i|k}, && \forall\ i\in\Nset_{[0,N-1]}, \label{mip:eq:optconstraintsdynamicsd} \\
  0 &\leq \epsilon_{i|k}, && \forall\ i\in\Nset_{[0,N]}, \label{mip:eq:optconstraintsepsmin} \\
  0_{n_u} & \leq u_{i|k} \leq \overline{u}, && \forall\ i\in\Nset_{[0,N-1]},  \label{mip:eq:optconstraintsinput} \\
  J_{u}u_{i|k} &\leq \overline{u_\Sigma}\delta_{i|k}, && \forall\ i\in\Nset_{[0,N-1]}, \label{mip:eq:optconstraintssetuptime0} \\
  J_{u}u_{i|k} &\leq \overline{u_\Sigma}S^\top_\tau\delta_{i-\tau|k}, &&\hspace{-4mm} \left\{\begin{aligned} &\forall\hspace{1mm}\tau\hspace{-0.5mm}\in\Nset_{[1,\min\{k,\overline{s}\}]}, \\ &\forall\ i\in\Nset_{[0,N-1]}. \end{aligned}\right. \label{mip:eq:optconstraintssetuptimetau}
\end{align}
\end{subequations}
In \eqref{mip:eq:optcostQP}, we choose $Q = \frac{1}{n_z} I_{n_z}$ and $f_{\epsilon} = 10$, which are normalized with respect to the number of weighted variables (note that $\epsilon$ is scalar) for more intuitive tuning, and use horizon $N=8$. The weighting $f_{\epsilon}$ on the linear term in \eqref{mip:eq:optcostQP}, incorporating the upper temperature bound as a soft constraint, is large relative to $Q$, reflecting the fact that the prevention of overheating is prioritized. The quadratic term in \eqref{mip:eq:optcostQP} weighted by $Q$ enforces tracking of the ROI temperature towards the optimal treatment temperature. The equality constraints \eqref{mip:eq:optconstraintsdynamics} capture the system dynamics as modeled by \eqref{mip:eq:mpcmodel}, with the initial condition \eqref{mip:eq:optconstraintsx0} determined by the observer \eqref{mip:eq:observer}. The equality \eqref{mip:eq:optconstraintsdeltasum} captures \eqref{mip:eq:deltasum} throughout the horizon. The inequalities \eqref{mip:eq:optconstraintsTminmax} encode the upper temperature bound as a soft constraint using the slack variable, which is restricted to be nonnegative by \eqref{mip:eq:optconstraintsepsmin}. Finally, \eqref{mip:eq:optconstraintsinput} corresponds to the actuator constraints \eqref{mip:eq:modelinputconstraintsu}, and the inequalities \eqref{mip:eq:optconstraintssetuptime0}-\eqref{mip:eq:optconstraintssetuptimetau} incorporate the mode and setup time constraints \eqref{mip:eq:modelsetuptimeconstraints}. Note that in \eqref{mip:eq:optconstraintssetuptimetau}, $\delta_{i-\tau|k}=\delta_{k+i-\tau}$ for $i<\tau$, following from the system's activator sequence prior to time $k\in\Nset$. As discussed in \algoref{mip:alg:MIMPC}, after solving \eqref{mip:eq:opt} we apply the admissibility assurance from \eqref{mip:eq:thmadmissassurance} in \thmref{mip:thm:admissassurance} to ensure that the resulting optimal pair $(\bm{\delta}_k^*,\bm{u}_k^*)$ is $\Sigma$-admissible.

\subsection{Simulation results}

The temperature control performance of the MI-MPC setup is verified by means of simulation, using Matlab R2017b and Gurobi 8.1.1. We initialize the transducer in cell 1, set the plant, observer, and controller states to zero, corresponding to the monitored tissue to be at the blood temperature $T_b=37$~$^\circ$C before treatment, and initialize the noise sequence. The temperature evolution of the ROI voxels of the plant (gray) and the observer/controller (black) are visualized in \figref{mip:fig:corner4lumpin_TROI} using their mean (solid) and maximum/minimum values (dashed).
\begin{figure}[t]
  \centering
  \psfrag{z+Tb}[cc][cc][0.9][0]{$z+T_b$}
  \psfrag{s+Tb}[cc][cc][0.9][0]{$\hat{z}+T_b$}
  \psfrag{time}[cc][cc][0.9][0]{Time [s]}
  \psfrag{temperature}[cc][cc][0.9][0]{Temperature in ROI [$^\circ$C]}
  \psfrag{0}[cc][cc][0.9][0]{0}
  \psfrag{100}[cc][cc][0.9][0]{100}
  \psfrag{200}[cc][cc][0.9][0]{200}
  \psfrag{300}[cc][cc][0.9][0]{300}
  \psfrag{400}[cc][cc][0.9][0]{400}
  \psfrag{37}[cc][cc][0.9][0]{37}
  \psfrag{38}[cc][cc][0.9][0]{38}
  \psfrag{39}[cc][cc][0.9][0]{39}
  \psfrag{40}[cc][cc][0.9][0]{40}
  \psfrag{41}[cc][cc][0.9][0]{41}
  \psfrag{42}[cc][cc][0.9][0]{42}
  \psfrag{43}[cc][cc][0.9][0]{43}
  \includegraphics[scale=0.75,clip]{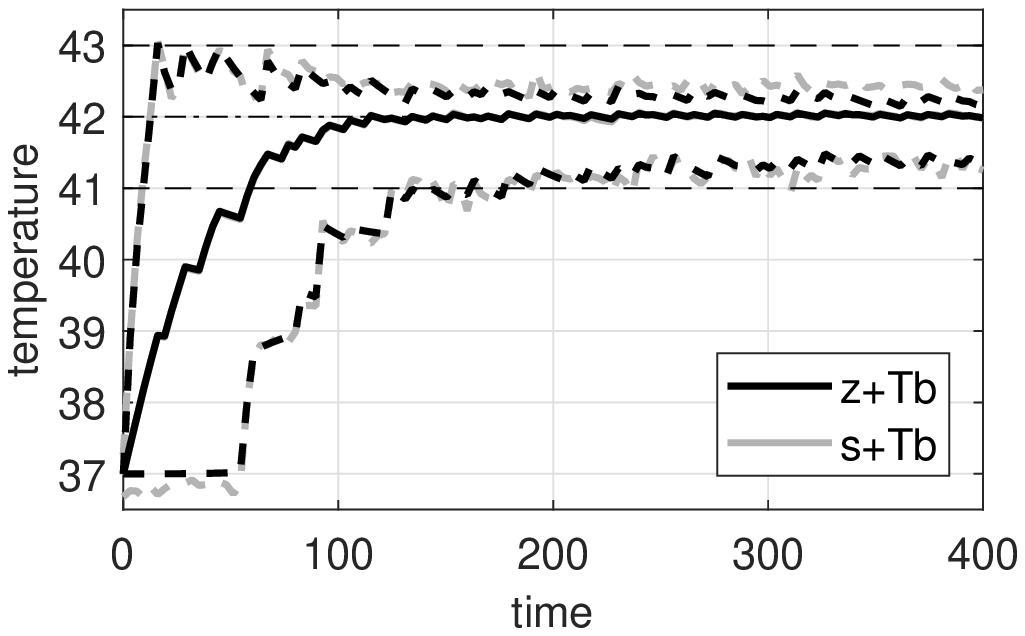}
  \caption{The mean (solid) and maximum/minimum (dashed) temperature inside the ROI $\Rs$ of the plant (black) and as estimated by the observer (gray).}
  \label{mip:fig:corner4lumpin_TROI}
\end{figure}%
It can be seen that after approximately 150~s the entire ROI is heated to 41~$^\circ$C, which is the temperature above which the beneficial hyperthermia-related effects are adequately triggered. Furthermore, during heat-up the upper temperature limit is observed to be reached, but no overheating occurs, and the average ROI temperature converges to the optimum $42$~$^\circ$C.

In \figref{mip:fig:corner4lumpin_Ptot}, the total acoustic input power per cell is shown, together with the destination mode $\post(\sigma_k)$.
\begin{figure}[t]
  \centering
  \psfrag{x}[cc][cc][0.9][0]{Time [s]}
  \psfrag{y}[cc][cc][0.9][0]{Acoustic power [W]}
  \psfrag{sigma}[cc][cc][0.9][0]{Destination mode $\post(\sigma_k)$}
  \psfrag{11}[cc][cc][0.9][0]{$1$}
  \psfrag{22}[cc][cc][0.9][0]{$2$}
  \psfrag{33}[cc][cc][0.9][0]{$3$}
  \psfrag{44}[cc][cc][0.9][0]{$4$}
  \psfrag{postsss}[cc][cc][0.9][0]{$\post(\sigma_k)$}
  \psfrag{0}[cc][cc][0.9][0]{0}
  \psfrag{100}[cc][cc][0.9][0]{100}
  \psfrag{200}[cc][cc][0.9][0]{200}
  \psfrag{300}[cc][cc][0.9][0]{300}
  \psfrag{400}[cc][cc][0.9][0]{400}
  \psfrag{20}[cc][cc][0.9][0]{20}
  \psfrag{40}[cc][cc][0.9][0]{40}
  \psfrag{60}[cc][cc][0.9][0]{60}
  \psfrag{80}[cc][cc][0.9][0]{80}
  \psfrag{50}[cc][cc][0.9][0]{50}
  \psfrag{100}[cc][cc][0.9][0]{100}
  \psfrag{1}[cc][cc][0.9][0]{1}
  \psfrag{2}[cc][cc][0.9][0]{2}
  \psfrag{3}[cc][cc][0.9][0]{3}
  \psfrag{4}[cc][cc][0.9][0]{4}
  \includegraphics[scale=0.58,clip]{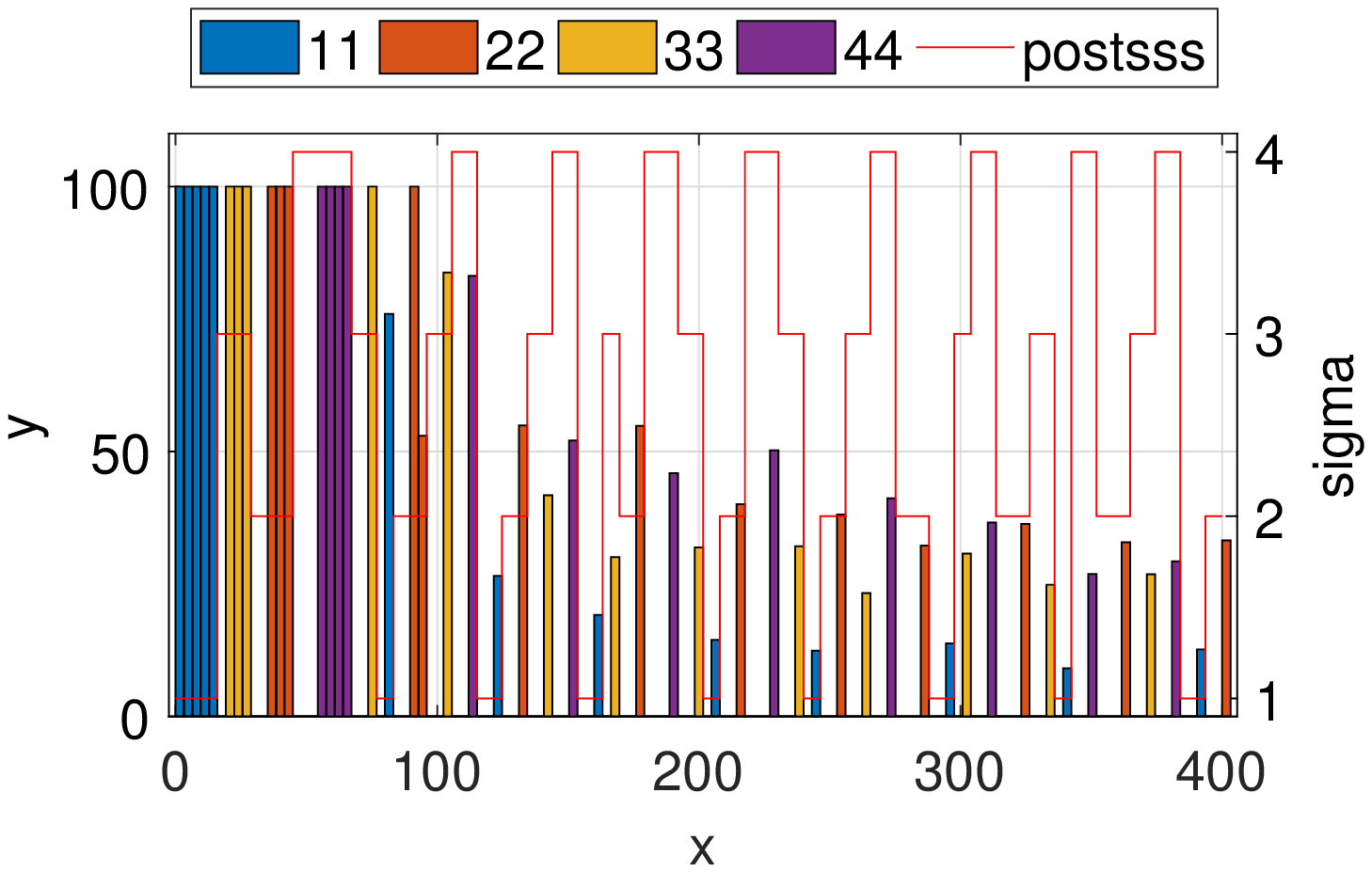}
  \caption{The total acoustic power per cell, and the destination mode $\post(\sigma_k)$ (red line).}
  \label{mip:fig:corner4lumpin_Ptot}
\end{figure}%
First, note that $\post(\sigma_k)$ (which is directly related to $\delta_k$ by \eqref{mip:eq:delta}) and $u_k^q$ indeed correspond to an admissible sequence of actuator-input pairs, describing the SAcSS to heat only in the active cell, perform the minimum number of mode switches required to allow for the desired nonzero inputs, and respect the setup time induced by each switch. From this, we verify the admissibility-assured MI-MPC setup to be functioning as intended. Next, regarding the input power, the figure shows that the controller initially requests maximum power for fast heat-up, reaching the upper limit of the total cell power constraint \eqref{mip:eq:optconstraintssetuptime0}. Moreover, in this period it often heats a certain cell for several consecutive samples, thereby reducing actuator downtime, thus contributing to achieving a short heat-up phase. For $t_k>100$~s, however, each cell is only heated for one sample before continuing to the next cell, as this allows for maintaining a ROI temperature distribution that is as homogeneous as possible.

The plant temperature at $t_k=400$~s is shown in \figref{mip:fig:corner4lumpin_Tmap400}, where additionally the perimeters of $\Rs$ (red) and $\Ss$ (dashed black) are plotted.
\begin{figure}[t!]
  \centering
  \psfrag{x}[cc][cc][0.9][0]{$r_x$ [m]}
  \psfrag{y}[cc][cc][0.9][0]{$r_y$ [m]}
  \psfrag{-0.02}[cc][cc][0.9][0]{$-0.02$}
  \psfrag{0.01}[cc][cc][0.9][0]{$-0.01$}
  \psfrag{0}[cc][cc][0.9][0]{0}
  \psfrag{0.01}[cc][cc][0.9][0]{$0.01$}
  \psfrag{0.02}[cc][cc][0.9][0]{$0.02$}
  \psfrag{1}[cc][cc][0.7][0]{38}
  \psfrag{2}[cc][cc][0.7][0]{39}
  \psfrag{3}[cc][cc][0.7][0]{40}
  \psfrag{4}[cc][cc][0.7][0]{41}
  \psfrag{4.5}[cc][cc][0.7][0]{41.5}
  \psfrag{5}[cc][cc][0.7][0]{42}
  \psfrag{5.1}[cc][cc][0.7][0]{42.1}
  \psfrag{5.2}[cc][cc][0.7][0]{42.2}
  \includegraphics[scale=0.80,clip]{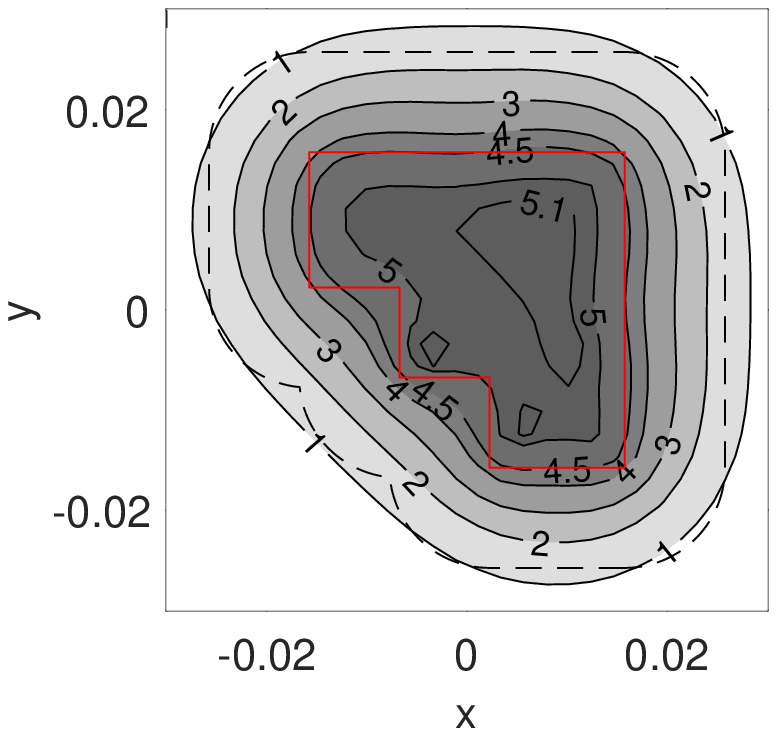}
  \caption{Contour plot of the focal plane temperature distribution in the ROI $\Rs$ (red) and $\Ss$ (dashed) at $t_k=400$~s.}
  \label{mip:fig:corner4lumpin_Tmap400}
\end{figure}%
This figure clearly illustrates the temperature homogeneity in $\Rs$. The maximum temperature (42.1~$^\circ$C) is observed in all cells except cell 2 (top-left). In \figref{mip:fig:corner4lumpin_Ptot}, it can be seen that at time $t_k=400$~s the transducer has just arrived at cell 2, and that the MPC has computed significant sonication power (approximately 33~W) to be applied, which will result in a temperature increase in this cell. This exemplifies the advantage of a \emph{predictive} controller for large-volume MR-HIFU hyperthermia. That is, the MPC is able to anticipate the diffusive heat loss during the future time interval in which a cell cannot be heated due to the transducer having moved to another cell. To preemptively counteract this heat loss, which is most severe at the ROI's corners, additional heat is injected especially at the corners just before relocating the transducer, resulting in temperature peaks (slightly) above the reference temperature of 42 $^\circ$C. Then, after the subsequent heat diffusion during the time period that a cell cannot be heated, minor temperature peaks remain near the ROI's corners, while the corner temperatures are still within the desired range ($\geq41$~$^\circ$C), as can be seen in the corners of cells 1, 3, and 4 in \figref{mip:fig:corner4lumpin_Tmap400}.

\subsection{Computational efficiency}

To evaluate the improved computational efficiency of the online MI-MPC problem when using the modeling approach proposed in this paper, we compare the resulting computation time to that of the MI-MPC setup obtained by modeling the SAcSS using a lifting approach \cite{Subramanian2012}, effectively yielding a constrained switched linear system (cSLS) \cite{Liberzon2003,Philippe2016} in MLD form \cite{Bemporad1999}. Similar to our method, for the purpose of MI-MPC this also requires duplicating shared input channels such that the resulting input vector is the collection of the disjoint inputs, and hence the discrete-time dynamics are described by \eqref{mip:eq:model}. Contrary to our method, however, is the fact that we must now define $s_{q\tilde{q}}$ actuator \emph{transition} modes for each switch $(q,\tilde{q})\in\Es$, in addition to the $N_q=4$ operational modes, which for this case study results in $s_\Sigma \coloneqq \sum_{(q,\tilde{q})\in\Es} s_{q\tilde{q}} = 24$ additional modes.
\ifthesis
{\color{black}
The constrained switching is graphically represented in \figref{mip:fig:automaton4_sacss_cSLS}, where the operational and transition modes are indicated by the numbered and (smaller) unnumbered nodes, respectively.
\begin{figure}[t]
  \centering
  \includegraphics[width=0.4\textwidth]{figures/automaton4_sacss_cSLS_2.eps}
  \caption{\color{black}Graph representation of the switching when modeled as a cSLS. Switches are possible as indicated by the arrows, and from each color-filled node to each node with an outline of the same color.}
  \label{mip:fig:automaton4_sacss_cSLS}
\end{figure}%
The arcs represent possible switches. Additionally, switches are possible from each color-filled node to each node with an outline of the same color, modeling the possibility for back-to-back actuator switches.
}%
\fi

Correspondingly, we define the $N_q$ Boolean actuator-operation states
\begin{equation}\label{mip:eq:cSLSactuatoractivity}
  \beta_k^q \in \{0,1\},\quad q\in\Qs,
\end{equation}
for which $\beta_k^q=1$ if at time $k$ the SAcSS is in operational mode $q\in\Qs$, and $\beta_k^q=0$ otherwise. In addition, we must for each actuator switch $(q,\tilde{q})\in\Es$ with nonzero setup time $s_{q\tilde{q}}$ define the Boolean actuator-transition states
\begin{equation}\label{mip:eq:cSLSactuatortransition}
  \beta_k^{q\tilde{q}} = \begin{bmatrix} \beta_k^{q\tilde{q},1} \\ \vdots \\ \beta_k^{q\tilde{q},s_{q\tilde{q}}} \end{bmatrix} \in\{0,1\}^{s_{q\tilde{q}}},
\end{equation}
where $\delta_k^{q\tilde{q},i}=1$ if at time $k$ the SAcSS is at setup time instant $i\in\Nset_{[1,s_{q\tilde{q}}]}$ of the switch $(q,\tilde{q})\in\Es$. Since a SAcSS' actuator can only be in one mode or at one stage of a switch at any given time, it must hold that
\begin{equation}\label{mip:eq:cSLSbetasum}
  1_{N_q+s_\Sigma}^\top \beta_k = 1,
\end{equation}
where $\beta_k$ is the collection of all actuator-operation and actuator-transition states in \eqref{mip:eq:cSLSactuatoractivity} and \eqref{mip:eq:cSLSactuatortransition} at time $k$. The mode-dependent input constraints can be expressed as
\begin{equation}\label{mip:eq:cSLSinputconstraint}
  \underline{u}^q\beta_k^q \leq u_k^q \leq \overline{u}^q\beta_k^q \quad \text{for all}\ q\in\Qs,
\end{equation}
or, in case the input bounds are explicitly imposed by \eqref{mip:eq:modelinputconstraintsu} and using $\overline{u_\Sigma}$ from \eqref{mip:eq:modelinputconstraintsusum}, as the simplified form
\begin{equation}\label{mip:eq:cSLSinputconstraintsum}
  1_{n_u^q}^\top u_k^q \leq \overline{u_\Sigma}\beta_k^q \quad \text{for all}\ q\in\Qs.
\end{equation}

Next, the switching progression along the successive actuator-transition states is prescribed by
\begin{equation}\label{mip:eq:cSLSactuatortransitiondynamics}
  \begin{bmatrix} \beta_{k+1}^{q\tilde{q},2} \\ \vdots \\ \beta_{k+1}^{q\tilde{q},s_{q\tilde{q}}} \end{bmatrix} = \begin{bmatrix} 1 & 0 & & \\ & \ddots & \ddots & \\ & & 1 & 0 \end{bmatrix} \beta_k^{q\tilde{q}},
\end{equation}
for all $(q,\tilde{q})\in\Es$ with $s_{q\tilde{q}}>1$. Furthermore, recall that a SAcSS can only be in operational actuator mode $q\in\Qs$ when at the previous time instant it was either already in this mode, in its last setup time instant towards this mode, or in an operational mode $\tilde{q}\in\Qs$ from which the switch towards $q$ induces zero setup time. This can be described by the inequality
\begin{equation}\label{mip:eq:cSLSmoveconstraintarrive}
  \beta_k^q \leq \beta_{k-1}^q + \sum_{\tilde{q}\in\Qs_0^q} \beta_{k-1}^{\tilde{q}} + \sum_{\tilde{q}\in\Qs_{>0}^q} \beta_{k-1}^{\tilde{q}q,s_{\tilde{q}q}},
\end{equation}
where similarly to \eqref{mip:eq:setupmodeset} we define for $q\in\Qs$ the sets
%\begin{subequations}
\begin{align*}
  \Qs_{0}^{q} &= \{ \tilde{q}\in\Qs \mid s_{\tilde{q}q} = 0 \}, \\ %\label{mip:eq:setupmodesetzero} \\
  \Qs_{>0}^{q} &= \{ \tilde{q}\in\Qs \mid s_{\tilde{q}q} > 0 \}, %\label{mip:eq:setupmodesetnonzero}
\end{align*}
%\end{subequations}
as the set of actuator modes $\tilde{q}\in\Qs$ from which the switch towards mode $q\in\Qs$ induces zero and nonzero setup time, respectively. Similarly, an actuator switch $(q,\tilde{q})\in\Es$ can only start at time $k$ when at $k-1$ the SAcSS was in or arriving at mode $q\in\Qs$, and hence
\begin{equation}\label{mip:eq:cSLSmoveconstraintstart}
  \beta_k^{q\tilde{q},1} \leq \beta_{k-1}^q + \sum_{\hat{q}\in\Qs_{>0}^q} \beta_{k-1}^{\hat{q}q,s_{\hat{q}q}}.
\end{equation}
Note that, contrary to \eqref{mip:eq:cSLSmoveconstraintarrive}, in \eqref{mip:eq:cSLSmoveconstraintstart} we do not need to include the possibility that at time $k-1$ the SAcSS was in some mode $\hat{q}\in\Qs_0^q$, since switching towards mode $\tilde{q}$ at time $k$ would then be described by the direct switch $(\hat{q},\tilde{q})\in\Es$ (i.e., $\beta_k^{\hat{q}\tilde{q}}=1$, not $\beta_k^{q\tilde{q}}=1$) due to $\Gamma$ being complete and satisfying the triangle inequality.

Finally, to obtain the MLD system form of the cSLS model of the considered SAcSS, all of the above constraints must be combined and rewritten in matrix-vector form in terms of the inputs and the actuator-operation and actuator-transition states, and combined with the discrete-time plant dynamics \eqref{mip:eq:model}. However, as these are not easy to write down compactly, the MLD system description is not explicitly given here.

The resulting MI-MPC consists of the optimization problem \eqref{mip:eq:optcostQP} (except for optimizing over $\bm{\beta}_k=(\beta_{0|k},\ldots,\beta_{N-1|k})$ instead of $\bm{\delta}_k$) subject to the thermal dynamics and initial condition constraints \eqref{mip:eq:optconstraintsdynamics} and \eqref{mip:eq:optconstraintsx0}, the upper temperature bound soft constraints \eqref{mip:eq:optconstraintsTminmax} and \eqref{mip:eq:optconstraintsepsmin}, the input bounds \eqref{mip:eq:optconstraintsinput}, and additionally (after substituting $u_k$ and $\beta_{k}$ by their prediction counterparts $u_{i|k}$ and $\beta_{i|k}$) for all $i\in\Nset_{[0,N-1]}$ the one-hot encoding constraint \eqref{mip:eq:cSLSbetasum}, the mode-dependent input constraints \eqref{mip:eq:cSLSinputconstraintsum}, the switch progression constraints \eqref{mip:eq:cSLSactuatortransitiondynamics}, and the constraints \eqref{mip:eq:cSLSmoveconstraintarrive} and \eqref{mip:eq:cSLSmoveconstraintstart} describing the actuator possibilities at the end and beginning of an actuator switch. Note that (for all $i\in\Nset_{[0,N-1]}$) we must impose \eqref{mip:eq:cSLSinputconstraintsum} and \eqref{mip:eq:cSLSmoveconstraintarrive} for all $q\in\Qs$, and \eqref{mip:eq:cSLSactuatortransitiondynamics} and \eqref{mip:eq:cSLSmoveconstraintstart} for all $(q,\tilde{q})\in\Es$ with $s_{q\tilde{q}}>1$ and $s_{q\tilde{q}}>0$, respectively.

Compared to \eqref{mip:eq:opt}, the continuous part of the optimization problem is unchanged, while the number of Boolean decision variables increases from $N_qN=32$ ($\delta_{i|k}$ in \eqref{mip:eq:deltavec} over the horizon $N$) to $(N_q+s_\Sigma)N=224$ ($\beta_{i|k}$ in \eqref{mip:eq:cSLSactuatoractivity}-\eqref{mip:eq:cSLSactuatortransition} over $N$). The number of integer equality constraints increases from $N=8$ in \eqref{mip:eq:optconstraintsdeltasum} to $(1+\sum_{(q,\tilde{q})\in\Es}\max\{s_{q\tilde{q}}-1,0\})N=104$ in \eqref{mip:eq:cSLSbetasum} and \eqref{mip:eq:cSLSactuatortransitiondynamics} over $N$. Finally, the number of mixed-integer inequality constraints increases from $N_q(1+\overline{s})N=128$ in \eqref{mip:eq:optconstraintssetuptime0}-\eqref{mip:eq:optconstraintssetuptimetau} (actually $N_q(1+\overline{s}-2/4)N=112$, since 2 out of 4 constraints generated using $S_3$ from \eqref{mip:eq:modelStau} for each $i$ in \eqref{mip:eq:optconstraintssetuptimetau} are redundant, see \remref{mip:rem:redundantconstraints}) to $(2N_q+\sum_{(q,\tilde{q})\in\Es}\min\{s_{q\tilde{q}},1\})N=160$ in \eqref{mip:eq:cSLSinputconstraintsum} and \eqref{mip:eq:cSLSmoveconstraintarrive}-\eqref{mip:eq:cSLSmoveconstraintstart} over $N$.

In simulation, the control inputs and resulting temperatures using the cSLS-based MI-MPC are found to indeed be exactly the equal to those obtained using the MI-MPC \eqref{mip:eq:opt} up to numerical tolerances.
The computation times for both MI-MPC setups have been recorded using five different noise realizations and five runs per realization, i.e., twenty-five simulations for each MI-MPC. The distributions of the computation times are depicted in \figref{mip:fig:corner4lumpin_Tcomp} (normalized with respect to the largest observed computation time) over the entire simulation, and over the transient ($t_k<100$) and steady-state ($t_k\geq100$) intervals separately.
\begin{figure}[t]
  \centering
  \psfrag{Norm. computation time}[cc][cc][0.9][0]{Norm. computation time}
  \psfrag{SAcSS}[cc][cc][0.8][0]{SAcSS}
  \psfrag{cSLS}[cc][cc][0.8][0]{cSLS}
  \psfrag{SAcSS tr}[cc][cc][0.8][0]{SAcSS,t}
  \psfrag{cSLS tr}[cc][cc][0.8][0]{cSLS,t}
  \psfrag{SAcSS ss}[cc][cc][0.8][0]{SAcSS,s}
  \psfrag{cSLS ss}[cc][cc][0.8][0]{cSLS,s}
  \psfrag{0}[cc][cc][0.8][0]{0}
  \psfrag{0.2}[cc][cc][0.8][0]{0.2}
  \psfrag{0.4}[cc][cc][0.8][0]{0.4}
  \psfrag{0.6}[cc][cc][0.8][0]{0.6}
  \psfrag{0.8}[cc][cc][0.8][0]{0.8}
  \psfrag{1}[cc][cc][0.8][0]{1}
  \includegraphics[scale=0.80,clip]{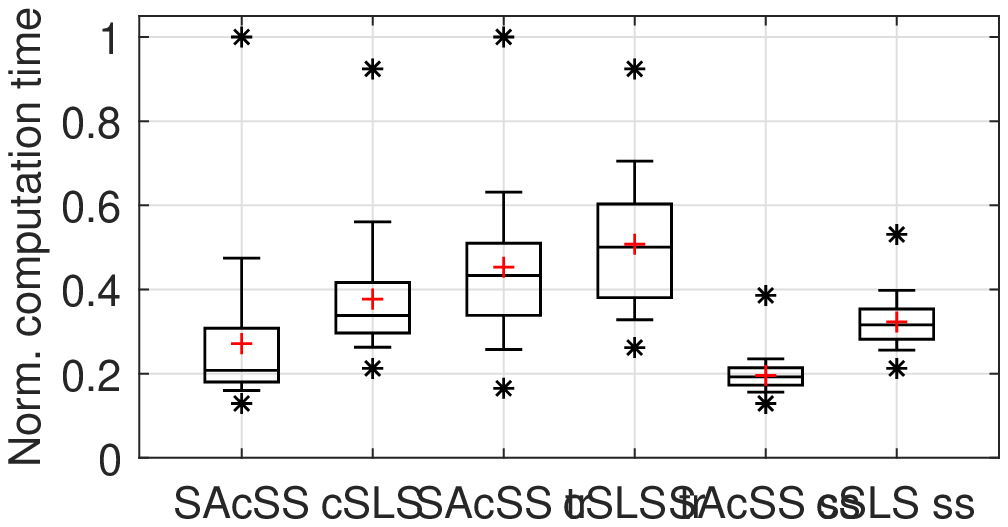}
  \caption{Distribution of the normalized computation times for the SAcSS MI-MPC \eqref{mip:eq:opt} and cSLS MI-MPC over the entire simulation, and over the transient (t) and steady-state (s) intervals only. We indicate the mean (red +), median (central mark), 25th and 75th percentiles (box edges), 10th and 90th percentiles (whiskers), and extrema ($\ast$).}
  \label{mip:fig:corner4lumpin_Tcomp}
\end{figure}
The MI-MPC \eqref{mip:eq:opt} designed specifically for SAcSS clearly outperforms the MI-MPC derived using the cSLS model, with the latter showing 39\% and 63\% larger overall mean and median computation times. %Especially in the constant-temperature phase, which comprises the majority of a hyperthermia treatment, the cSLS MI-MPC results in 65\% larger mean and median computation times than the MI-MPC obtained using our novel method.
In fact, in the constant-temperature phase, which comprises the majority of a hyperthermia treatment, the cSLS MI-MPC computation times' 10th percentile is larger than the 90th percentile of \eqref{mip:eq:opt}, and the mean and median computation times are 65\% larger than those obtained using our novel method.

\section{Conclusion}\label{mip:sec:conclusion}

In this paper, the class of switched-actuator systems with setup times (SAcSSs) has been formally introduced. As key contribution, a modeling framework for SAcSSs has been presented, which is specifically tailored to (a) allow for user-friendly system specification, modeling, and controller synthesis, and (b) yield compact models leading to efficient MI-MPCs. In particular, the resulting model is in MLD form, consisting of a state-space representation of the SAcSS' dynamics in its $N_q$ operational modes, combined with systematically derived mixed-integer linear inequality constraints on the inputs (thus achieving compatibility with MI-MPC) to incorporate the mode switching and setup times. A distinctive property of our method is that instead of explicitly modeling the zero-input ``switching modes'' as in a constrained switched linear system (cSLS) or lifting approach, which would require many auxiliary Boolean variables per time step, it uses only the $N_q$ Boolean variables corresponding to the operational modes, and infers the actuator activity from a sequence of these variables in the prediction horizon. This reduces the MI-MPC's computational complexity. The proposed modeling procedure and corresponding MI-MPC setup have been validated in a large-volume MR-HIFU hyperthermia case study. It is demonstrated that the desired temperature distribution can be achieved and maintained in a large tumor by coordinated heating and mechanical transducer displacement, and that the MI-MPC's computational efficiency is improved with respect to a cSLS/lifting approach.

\ifthesis
\section{Appendix: Proof of observer stability}\label{mip:appendix}

Consider the observer \eqref{mip:eq:observer}, and note that $A$ is Schur, i.e., with all eigenvalues strictly inside the unit disc of the complex plane, due to the tissue's stable (first-order) thermal dynamics \eqref{mip:eq:model}. The corresponding estimation error dynamics are given by
\begin{equation}\label{mip:eq:estimationerror}
  e_k = x_k-\hat{x}_k = \left\{\begin{array}{ll} (A-LA)e_{k-1}, & \text{if}\ \sigma_{k-1}\in\Qs, \\ Ae_{k-1}, & \text{if}\ \sigma_{k-1}\in\Es, \end{array}\right.
\end{equation}
Since $A$ is Schur, there exists a positive definite matrix $P>0$ such that
\begin{equation}
  A^\top PA - P < 0,
\end{equation}
which ensures global exponential stability of the $\sigma_{k-1}\in\Es$ subsystem of \eqref{mip:eq:estimationerror} when using $V(e_k)=e_k^\top Pe_k$ as Lyapunov function. Next, for $L=\alpha_l I$ with $0\leq\alpha_l\leq1$, it holds that
\begin{equation}
  (A-LA)^\top P(A-LA) - P = (1-\alpha_l)^2A^\top PA - P \leq A^\top PA - P < 0,
\end{equation}
which guarantees global exponential stability of the $\sigma_{k-1}\in\Qs$ subsystem of \eqref{mip:eq:estimationerror} using the same Lyapunov function $V(e_k)=e_k^\top Pe_k$. As $V(e_k)$ is a \emph{common} Lyapunov function for the two subsystems, \eqref{mip:eq:estimationerror} is globally exponentially stable for arbitrary switching sequences of $\sigma_k$.
\fi

%%%%%%%%%%%%%%%%%%%%%%%%%%%%%%%%%%%%%%%%%%%%%%%%%%%%%%%%%%%%%%%%%%%%%%%%%%%%%%%%

%\ifthesis
\section*{Acknowledgements}
The authors thank K.S. Mohan and J. van Wordragen for their contributions to the case study, and J.B. Rawlings for his valuable insights and suggestions.

\section*{References}

\bibliography{C:/Users/ddeenen/SURFdrive/PhD/Literature/BibTeX/library}

\end{document}